\newtheorem{theorem}{Theorem}
\newtheorem{lemma}{Lemma}
\newcommand{\argmax}{\operatornamewithlimits{argmax}}
\newcommand{\ddss}{\displaystyle}
\newcommand{\bbee}{\begin{equation}}
\newcommand{\eeee}{\end{equation}}
\newcommand{\bbaa}{\begin{array}}
\newcommand{\eeaa}{\end{array}}
\begin{document}
\thispagestyle{empty}
\setcounter{page}{1}
\setlength{\baselineskip}{1.5\baselineskip}
\title{Data Processing Bounds for Scalar Lossy Source Codes with Side Information at the Decoder \thanks{This research is supported by the Israeli Science Foundation (ISF),
grant no.\ 208/08.}}
\author{Avraham Reani and Neri Merhav\\
Department of Electrical Engineering\\
Technion - Israel Institute of Technology\\
Technion City, Haifa 32000, Israel\\
Emails: [avire@tx, merhav@ee].technion.ac.il}\maketitle
\begin{abstract}
In this paper, we introduce new lower bounds on the distortion of scalar fixed-rate codes for lossy compression with side information available at the receiver. These bounds are derived by presenting the relevant random variables as a Markov chain and applying generalized data processing inequalities a la Ziv and Zakai. We show that by replacing the logarithmic function with other functions, in the data processing theorem we formulate, we obtain new lower bounds on the distortion of scalar coding with side information at the decoder. The usefulness of these results is demonstrated for uniform sources and the convex function $Q(t)=t^{1-\alpha}$, $\alpha>1$. The bounds in this case are shown to be better than one can obtain from the Wyner-Ziv rate-distortion function.
\\\\{\bf Index Terms: side information, Wyner-Ziv problem, Ziv-Zakai bounds, source coding, on-line schemes, scalar coding, R\'{e}nyi entropy, Rate-Distortion theory}
\end {abstract}
\clearpage
\section{Introduction}
The Wyner--Ziv (WZ) problem has received very much attention during the
last three decades. 
There were several attempts to develop practical
schemes for lossy coding in the WZ setting, by using codes
with certain structures that facilitate the encoding and the
decoding. Most notably, these studies include nested
structures of linear coset codes (in the role of bins) for discrete
sources, and nested lattice structures for continuous valued
sources, see e.g., \cite{Kusuma01}, \cite{Servetto03}.
Other directions of introducing structure into WZ coding are
associated with trellis/turbo/LDPC designs (\cite{XLC04}
and references therein) and with progressive coding, i.e.,
successive refinement with layered code design \cite{SM04},
\cite{CX04}.
The case of scalar source codes for the WZ problem was also handled in several papers, e.g. \cite{kusuma} and \cite{muresan1}. Zero-delay coding strategies for the WZ problem, were introduced in \cite{Teneketzis}, where structure theorems for fixed-rate codes, under the assumption of a Markov source, were given. These results were later extended in \cite{kaspi}, to include variable-rate coding.
In \cite{tuncel1} and \cite{tuncel2} it was conjectured that under the high-resolution assumption, the optimal quantization level density is periodic. In addition, zero-delay schemes for specific source-side information correlation were presented in \cite{tuncel1}, \cite{tuncel2} and \cite{tuncel3}. Zero-delay coding of individual sequences under the conditions of the WZ problem was considered in \cite{reani}, where existence of universal schemes for fixed-rate and variable-rate coding was established.

In this paper, we develop lower bounds on the distortion in the scalar WZ setting. We apply the results of \cite{ZZ1} and \cite{ZZ2}, concerning functionals satisfying a data-processing
theorem, to this setting.
In \cite{ZZ1} it was shown that the rate-distortion (RD) bound ($R(D)\leq C$)
remains true when the negative logarithm function, in the definition of mutual information, is replaced
by an arbitrary convex, non--increasing function satisfying some technical conditions. For certain choices of this convex function, the bounds obtained were better than the classical RD bounds. These results were substantially generalized in \cite{ZZ2} to apply to even more general information measures. The methods of \cite{ZZ1} were also used in \cite{zamir1}, \cite{zamir2} and \cite{zamir3}. In these papers, lower bounds on the distortion of delay-constrained joint source-channel coding were given. These bounds were obtained by combining the R\'{e}nyi information measure \cite{Renyi} with the generalized data processing theorem of \cite{ZZ1}, and under high-resolution and high SNR approximations. Another related work is \cite{Merhav1}, where certain degrees of freedom of
the Ziv–-Zakai generalized mutual information were further exploited in order to get better bounds. It is worth to mention that there is some difference between our application of the generalized DPT and those of [15] and [16]. In [15] and [16], the higher term in the generalized RD bound (which will be referred as the generalized capacity) does not depend on the source distribution, whereas in our setting, it does. This is a direct consequence of the presence of side information which also has implications on the optimization problems we will encounter.

We start by presenting the relevant random variables of the WZ problem as a Markov chain. Then, using a data processing theorem, we obtain lower bounds on the distortion. We show that replacing the logarithmic function by other functions, may give better bounds on the distortion of delay--limited coding (in particular, for scalar coding) in the WZ setting. Examples of non-trivial lower bounds for scalar coding, in this setting, are obtained using the convex function $Q(t)=t^{1-\alpha}$, $\alpha>1$, which is equivalent to using the R\'{e}nyi information measure.
The importance of such bounds stems from the fact that finding the optimal scalar code in the WZ setting is, in general, a hard problem. In fact, it is a problem of finding an optimal partition of the source alphabet and this partition does not necessarily correspond to intervals. A main objective will be to use these bounds for studying the performance of concrete coding schemes. 

The remainder of the paper is organized as follows. In Section 2, we present our formulation of the WZ problem and establish a generalized data processing theorem (DPT) for this setting. In Subsection 2.1, we define the fixed-rate scalar coding case. We then give an upper bound on the generalized capacity, which is one component of the above DPT. 
In Subsection 2.2, we handle the second component of the generalized DPT, i.e., the generalized RD function. We start with a general characterization of this function. Then, we introduce a closed-form expression of the generalized RD function for uniformly distributed sources w.r.t. general symmetric distortion measures. In Section 3, we use the results of Section 2 to obtain non-trivial lower bounds on the distortion of scalar coding in the WZ setting in several cases. Finally, we demonstrate that for large alphabets, non-trivial bounds can be derived for various
channels and as a result, the performance range for scalar coding can be given.

\section{Problem Formulation and Results}
In this section, we present the relevant random variables of the WZ problem as a Markov chain and establish a generalized data processing theorem (DPT) for this setting, using the method of \cite{ZZ1}.

We begin with notation conventions. Capital letters represent scalar random variables, specific realizations of them are denoted by the corresponding lower case letters and their alphabets - by calligraphic letters. The inner product of the two vectors $\vec{a}$ and $\vec{b}$ will be denoted by $\vec{a}\cdot\vec{b}$. Logarithms are defined to the base $2$.

We consider a memoryless source producing a random sequence $X_1,X_2,\ldots$ $X_i\in {\cal{X}}$, $i=1,2,\ldots$, where ${\cal{X}}$ is a finite alphabet with cardinality ${K}$. Without loss of generality, we define this alphabet to be the set $\{1,2,\ldots,{K}\}$. The probability mass function of $X$, $p(x)$, is known. A fixed-rate scalar source code with rate $R=\log M$,\footnotemark \footnotetext{Through this paper, the symbols of $Z$ are not necessarily transformed into bits. Therefore, $\log M$ need not necessarily be an integer.}
partitions ${\cal{X}}$ into $M$ disjoint subsets $( A_1, A_2,\ldots, A_M)$, $M\leq{K}$. The encoder maps $X_i$ into a channel symbol $Z_i$, using a function $f : {\cal{X}} \rightarrow$$\{1, 2, \ldots, M\}$, that is, $Z_i=f(X_i)$. The decoder, in addition to $Z_i$, has access to a random variable $Y_i$, which is dependent on $X_i$ via a known discrete memoryless channel (DMC), defined by the single-letter transition probability matrix $\left\{p(y|x)\right\}$, whose entries are the conditional probabilities of the different channel output symbols given the channel input symbols. Based on $Z_i$ and $Y_i$, the decoder produces the reconstruction $\hat{X}_i$, using a decoding function 
$g : \{1, 2, \ldots,M\}\times{\cal{X}}\rightarrow{\cal{X}}$, i.e., $\hat{X}_i=g(Z_i,Y_i)$.
This setting is depicted in Fig. \ref{WZ_Sett_fig}.
For simplicity, we assume that $X_i$, $Y_i$ and $\hat{X}_i$, all take on values in the same finite alphabet ${\cal{X}}$. The distortion in this setting is defined to be:
\begin{equation}
D=\mathbb{E} \rho(X_i,\hat{X}_i)=\ddss\sum_{x,y} p(x,y)\rho(x,\hat{x})
\eeee
where $p(x,y)$ is the joint distribution of $x$ and $y$ and $\rho(x,\hat{x})$ is a distortion measure, with $\rho(x,x)=0$.

Let $Q(t)$, $0 \leq t < \infty$, be a real-valued convex function, where 
\begin{equation}
\displaystyle\lim_{t\rightarrow 0} t\cdot Q(1/t)=0. 
\label{q_o_cond}
\end{equation}
This requirement implies that $Q(t)$ is non-increasing, as was shown in \cite{ZZ1}. We define $0\cdot Q(r/0)=0$, for all $0 \leq r < \infty$. 
The generalized mutual information relative to the function $Q$ is defined as
\begin{equation}
I^Q(X;Y)=\sum_{x,y} p(x,y) Q\left(\frac{p(y)}{p(y|x)} \right).
\end{equation}
\begin{figure}[!hc]
	\centering		
	\includegraphics[width=3.5in, height=1.5in]{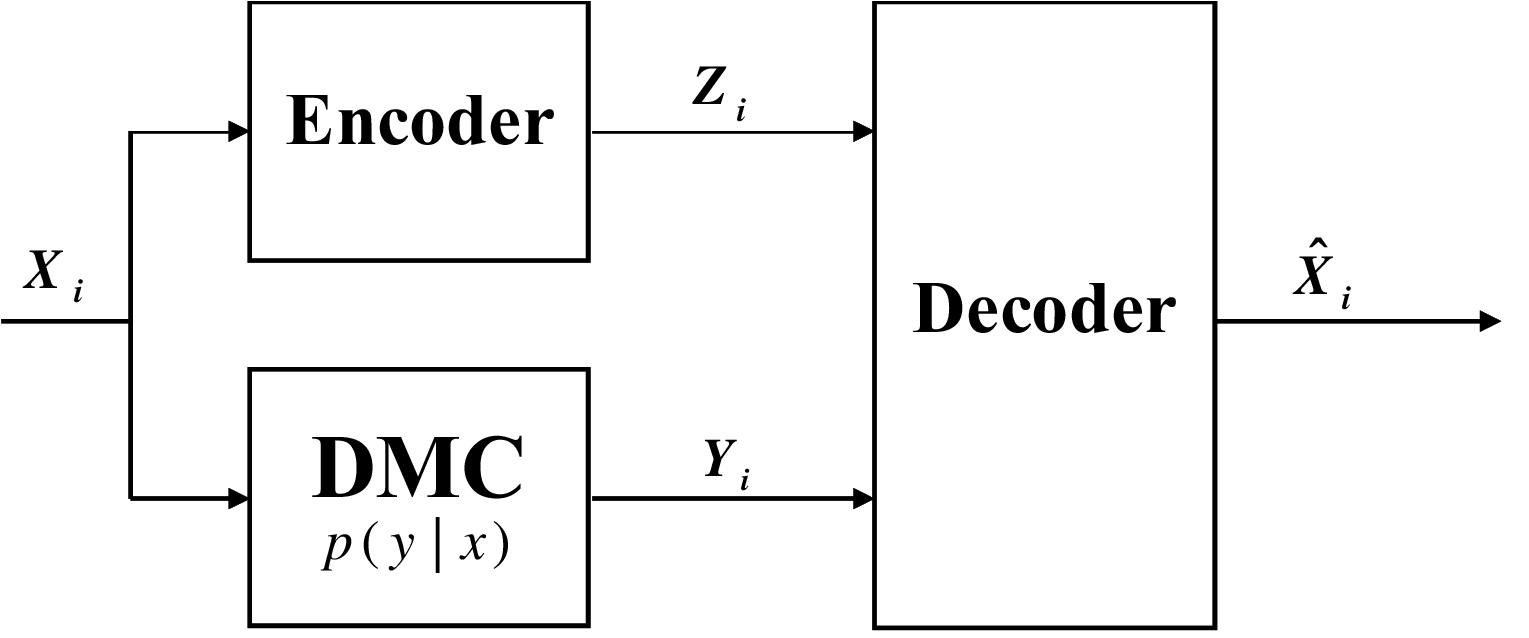}
	\caption { {The WZ setting}} 
	\label{WZ_Sett_fig}
\end{figure}
In the next steps, we will first allow $Z$ to be non-deterministic function of $X$. This will give us a more tractable form of the DPT.
We apply the generalized DPT \cite[Theorem 3]{ZZ1} in the following way:
\begin{equation}
I^Q(X;\hat{X})\leq I^Q(X;Y,Z)
\label{DPT1}
\end{equation}
where we have used the fact that $ X \leftrightarrow  (Y,Z) \leftrightarrow\hat{X}$ is a Markov chain.
Since $Z \leftrightarrow X \leftrightarrow Y$ is also a Markov chain, we have:
\begin{equation}
p(x,y,z)=p(x)p(y|x)p(z|x)
\end{equation}
and $I^Q(X;Y,Z)$ is given by:
\begin{eqnarray}
\nonumber I^Q(X;Y,Z)&=&\ddss\sum_{x,y,z}p(x)p(y|x)p(z|x)Q\left(\ddss\frac{p(y)p(z|y)}{p(y,z|x)}\right)\\
\nonumber &&\\
\nonumber &=&\ddss\sum_{x,y,z}p(x)p(y|x)\ddss p(z|x)Q\left(\ddss\frac{p(y)p(z|y)}{p(y|x)p(z|x)}\right)\\
\nonumber &&\\
\nonumber &=&\ddss\sum_{x,y,z}p(x)p(y|x)\ddss p(z|x)Q\left(\ddss\frac{\ddss\sum_{\tilde{x}}p(\tilde{x})p(y|\tilde{x})p(z|\tilde{x})}{p(y|x)p(z|x)}\right)\\
\nonumber &&\\
&=&\ddss\sum_{y}\ddss\sum_{z}\ddss\sum_{x}p(x)p(y|x)p(z|x)Q\left(\ddss\frac{\ddss\vec{p}_z\cdot \vec{p}_{y}}{p(y|x)p(z|x)}\right)\\
\nonumber\label{mutualinfo}
\end{eqnarray}
where we have defined the following ${K}$-dimensional vectors $\displaystyle\left\{ \vec{p}_z \right\}_{z=1}^{M}$:
\begin{eqnarray}
\vec{p}_z&=&[p(z|x), x\in {{\cal{X}}}]
\label{pznddef}
\end{eqnarray}
and the following ${K}$-dimensional vectors $\displaystyle\left\{ \vec{p}_{y}\right\}$, $y\in {{\cal{X}}}$:
\begin{eqnarray}
\vec{p}_{y}&=&[p(x,y),x\in {{\cal{X}}}].
\label{pynnddef}
\end{eqnarray}
By definition of $\displaystyle\left\{ \vec{p}_z \right\}_{z=1}^{M}$, we have the following property:
\begin{equation}
\sum_{z=1}^M \vec{p}_z = [1,1,\ldots,1].
\label{cons1}
\end{equation}
We now define the following functions $\left\{G_{y}(\vec{p}_z)\right\}_{y=1}^K$, $G_y: {\mathbb{R}}_+^{K} \rightarrow \mathbb{R}$:
\begin{equation}
G_{y}(\vec{p}_z)=\ddss\sum_{{x}}p({x})p({y}|{x})p(z|x)Q\left(\ddss\frac{\ddss\vec{p}_z\cdot \vec{p}_{{y}}}{p({y}|{x})p(z|x)}\right).
\label{impGfunctiondef}
\end{equation}
Using these functions, Eq. (\ref{mutualinfo}) becomes:
\begin{eqnarray}
I^Q(X;Y,Z)&=&\ddss\sum_{y}\ddss\sum_{z} G_{y}(\vec{p}_z).
\label{importantI}
\end{eqnarray}
The functions $G_{y}(\vec{p}_z)$ have the following property:
\begin{lemma}
For any convex function $Q$ satisfying (\ref{q_o_cond}), the functions $\left\{G_{y}(\vec{p}_z)\right\}$, $y\in {{\cal{X}}}$, are convex.
\label{convexityGy}
\end{lemma}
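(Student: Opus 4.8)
The plan is to reduce the claim to the convexity of the \emph{perspective} of $Q$. Observe that $G_y(\vec{p}_z)$ is a sum over $x$ of terms, each of which will turn out to be, up to a nonnegative multiplicative constant, the perspective of the scalar convex function $Q$ composed with a linear map of $\vec{p}_z$. Since a sum of convex functions is convex and multiplication by a nonnegative constant preserves convexity, it suffices to prove that each individual summand in (\ref{impGfunctiondef}), indexed by a single $x$, is a convex function of the vector $\vec{p}_z=[p(z|x'),x'\in{\cal X}]$.

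First I would introduce the perspective of $Q$, namely $g(w,s)=s\,Q(w/s)$ defined for $s>0$, and recall the standard fact that the perspective of a convex function is \emph{jointly} convex in $(w,s)$ on ${\mathbb R}\times(0,\infty)$. Next I would rewrite the $x$-th summand of (\ref{impGfunctiondef}) in perspective form. Writing $u_{x'}=p(z|x')$ for the components of $\vec{p}_z$ and setting
\begin{equation}
s_x(\vec{p}_z)=p(y|x)\,u_x,\qquad w(\vec{p}_z)=\vec{p}_z\cdot\vec{p}_y=\sum_{x'}u_{x'}\,p(x',y),
\end{equation}
both of which are linear in $\vec{p}_z$, one checks that the summand equals $p(x)\,s_x\,Q(w/s_x)=p(x)\,g\big(w(\vec{p}_z),s_x(\vec{p}_z)\big)$, since the prefactor $p(y|x)\,u_x$ is exactly $s_x$.

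Then I would invoke that the composition of a jointly convex function with an affine map is convex: as $(w(\vec{p}_z),s_x(\vec{p}_z))$ is a linear map of $\vec{p}_z$ and $g$ is jointly convex, each summand $p(x)\,g(w,s_x)$ is convex in $\vec{p}_z$. Summing over $x$ and using $p(x)\geq 0$ yields the convexity of $G_y$. I would stress that this route uses only the convexity of $Q$ (no smoothness), so a Hessian computation is deliberately avoided.

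The main obstacle is the boundary case $s_x=0$, occurring when $p(z|x)=0$ or $p(y|x)=0$, where the perspective formula is undefined and must take the convention value $0\cdot Q(w/0)=0$ adopted in the paper. Here I would show this convention coincides with the lower-semicontinuous closure of $g$, so that convexity is preserved on the closed domain $\{s_x\geq 0\}$. Indeed, for fixed $w>0$, $\lim_{s\to 0^+}s\,Q(w/s)=w\lim_{t\to\infty}Q(t)/t$, and the hypothesis $\lim_{t\to 0}t\,Q(1/t)=0$ is precisely the statement that $\lim_{t\to\infty}Q(t)/t=0$ (substitute $t\mapsto 1/t$); hence the limiting value is $0$, matching the convention. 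This justifies extending the convexity argument to the entire domain where components of $\vec{p}_z$ may vanish, completing the proof.
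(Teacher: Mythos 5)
Your proposal is correct and is essentially the paper's own argument: both decompose $G_y$ into its per-$x$ summands and reduce everything to the joint convexity of the perspective $h(u,s)=s\,Q(u/s)$ evaluated at affine functions of $\vec{p}_z$, the only difference being that the paper first invokes the restriction-to-lines criterion (checking convexity of $g(t)=G_y(\vec{r}+\vec{v}t)$) before recognizing each one-dimensional piece as the perspective restricted to a line, whereas you compose the jointly convex perspective with the linear map $\vec{p}_z\mapsto(w,s_x)$ directly. Your extra care at the boundary $s_x=0$ --- identifying the paper's convention $0\cdot Q(r/0)=0$ with the lower-semicontinuous closure of the perspective via the equivalence of $\lim_{t\to 0}t\,Q(1/t)=0$ with $\lim_{t\to\infty}Q(t)/t=0$ --- is a point of rigor that the paper's proof passes over silently, and it is handled correctly.
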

The proof is given in Appendix A. This convexity property has important implications in the optimization of $I^Q(X;Y,Z)$, as will be discussed later.
Assuming the encoder is given by a deterministic function $f: {{\cal{X}}} \rightarrow \{1,\ldots, M\}$, Eq. (\ref{mutualinfo}) becomes the following:
\begin{eqnarray}
\nonumber I^Q(X;Y,Z) 
\nonumber &=&\ddss\sum_{y}\ddss\sum_{x}p(x)p(y|x)\ddss\sum_{z}p(z|x)Q\left(\ddss\frac{\ddss\vec{p}_z\cdot \vec{p}_{y}}{p(y|x)p(z|x)}\right)\\ 
\nonumber &=&\ddss\sum_{y}\ddss\sum_{x}p(x)p(y|x)\ddss Q\left(\ddss\frac{\ddss\vec{p}_{f(x)}\cdot \vec{p}_{y}}{p(y|x)}\right)\\ 
&=&\displaystyle\sum_{x,y}p(x,y)Q\left(\displaystyle\frac{\displaystyle\sum_{\tilde{x}\in A_z} p(\tilde{x},y)} {p(y|x)} \right)\\
\nonumber\label{base_eq1}
\end{eqnarray}
where $z=f(x)$ and $A_z\equiv\{\tilde{x}: f(\tilde{x})=z\}$. Remember that we have defined $0\cdot Q(r/0)=0$.
Using $Q(t)=-\log t$ in (\ref{base_eq1}), thus turning back to the classical DPT, we next show the following result:
\begin{equation}
R(D)-I(X;Y)\leq \sup  H(Z|Y)
\label{wzlbtheo22}
\end{equation}
where $D=\mathbb{E} \rho(X,\hat{X})$, $R(D)$ is the classical RD function and the supremum is taken over all $Z=f(X)$, where $f : {\cal{X}} \rightarrow$$\{1, 2, \ldots, M\}$. 
This inequality stems from the Markov properties of the WZ problem we discussed before. The optimization over all scalar encoders, here and in the following, gives us lower bounds on the distortion of $any$ such encoder. We see that given a rate $R=\log M$, we should find the encoder that maximizes $H(Z|Y)$. This is not surprising as, intuitively, we want the amount of information that $Y$ has on $Z$ to be as little as possible, to decrease the redundancy. Ideally, we want the encoder output and the side information to be independent. This is indeed achieved by the block coding scheme of Wyner and Ziv, in the limit of infinite block length. The term $\sup \left\{ H(Z|Y)\right\}+I(X;Y)$ will be referred to as the ``capacity'' of the generalized channel between $(X,Z)$ and $Y$. This channel is composed of the DMC between $X$ and $Y$ and a noiseless channel with capacity $\log M$ for the encoder's output. Since the source distribution is given, the maximum rate of reliable communication over this channel is indeed $I(X;Y)+\sup H(Z|Y)$.
\begin{proof}[Proof of Eq. (\ref{wzlbtheo22})]
Using the function $Q(t)=-\log t $ in (\ref{base_eq1}), we get:
\begin{eqnarray}
I^Q(X;Y,Z)
\nonumber &=& H(Y,Z)-H(Y,Z|X)\\
\nonumber &=& H(Y)+H(Z|Y)-H(Y|X)-H(Z|Y,X)\\
&=& I(X;Y)+H(Z|Y)
\end{eqnarray}
where we have used the fact that $H(Z|Y,X)=H(Z|X)=0$ since $Z$ is a deterministic function of $X$. On substituting into (\ref{DPT1}), we get:
\begin{eqnarray}
\nonumber R(D) &\leq& I^Q(X;\hat{X})\\
\nonumber &\leq& I^Q(X;Y,Z)\\
\nonumber &=&I(X;Y) +H(Z|Y)\\
&\leq& I(X;Y) + \sup  H(Z|Y),
\end{eqnarray}
which is equivalent to Eq. (\ref{wzlbtheo22}).
\end{proof}
Notice that if we allow non-deterministic encoders, as in Eq. (\ref{mutualinfo}), we get the following:
\begin{equation}
R(D)\leq I(X;Y) + \sup \left\{ H(Z|Y)-H(Z|X)\right\},
\end{equation}
where the supremum is taken over the same set as in (\ref{wzlbtheo22}). Although randomizing the encoder can increase $H(Z|Y)$, it will also increase $H(Z|X)$. Due to the convexity property presented in Lemma \ref{convexityGy}, the supremum is achieved by a deterministic encoder, as will be discussed in the next section. Therefore, randomizing the encoder cannot improve the bound in this setting.

In Section 3, we show some examples of scalar coding, where this result gives us lower bounds on the distortion, which are better than the bounds obtained from the classical inequality $R_{WZ}(D)\leq \log M$, where $R_{WZ}(D)$ is the WZ RD function.
\subsection{Generalized DPT for fixed-rate scalar coding}
Assuming a deterministic encoder, the vectors $\displaystyle\left\{ \vec{p}_z \right\}_{z=1}^{M}$, defined in (\ref{pznddef}), become:
\begin{equation}
\vec{p}_z=[1_{1\in A_z},1_{2\in A_z}, \ldots, 1_{{{K}}\in A_z}],
\label{pzdef}
\end{equation}
where $1_B$ is the indicator function of the event $B$. The $j$th coordinate of $\vec{p}_z$ is $1$ if $j\in A_z$ and $0$ elsewhere. 
Using these vectors, we can rewrite (\ref{base_eq1}) in the following way:
\begin{eqnarray}
\nonumber I^Q(X;Y,Z) 
\nonumber &=&\displaystyle\sum_{x,y}p(x,y)Q\left(\frac{\displaystyle \vec{p}_{z(x)} \cdot \vec{p}_{y}}{p(y|x)}\right)\\
\nonumber &=&\displaystyle\sum_y \sum_{z}\sum_{x\in A_{z}}p(x,y)Q\left(\displaystyle\frac{\displaystyle 
\vec{p}_{z} \cdot \vec{p}_{y}}{p(y|x)}\right)\\
\nonumber &=&\displaystyle\sum_y \sum_{z} \vec{p}_{z} \cdot \vec{q}_{{z},y}\\
&=& \displaystyle\sum_y \sum_{z} \Gamma_y(\vec{p}_{z}),
\label{GMI11}
\end{eqnarray}
where we have defined the following ${K}$-dimensional vectors:
\begin{equation}
\begin{array}{l}
\vec{q}_{z,y}=\left[\ddss p(x_1,y)Q\left(\frac{\vec{p}_z \cdot \vec{p}_{y}}{p(y|x_1)}\right),
\displaystyle p(x_2,y)Q\left(\frac{\displaystyle \vec{p}_z \cdot \vec{p}_{y}}{p(y|x_2)}\right),\ldots\right]\\
\end{array}
\end{equation}
and the set of functions $\left\{\Gamma_y \right\}_{y=1}^{{K}}$, $\Gamma_y: \mathbb{R}_+^{{K}} \rightarrow \mathbb{R}$:
\begin{equation}
\Gamma_y(\vec{p}_{z})=\vec{p}_{z} \cdot \vec{q}_{{z},y}.
\label{funcG}
\end{equation}
Notice that the vector $\vec{p}_{y}$ depends only on $y$ and that the inner product  $\vec{p}_z \cdot \vec{p}_{y}$ is a function of $z$ and $y$.
Applying the RD bound \cite[Theorem 4]{ZZ1}, we get:
\begin{equation}
R^Q(D)\leq I^Q(X;\hat{X})\leq I^Q(X;Y,Z)\leq C^Q,
\label{DPT2}
\end{equation}
where
\begin{eqnarray}
R^Q(D)&=&\inf I^Q(X;\hat{X})
\end{eqnarray}
and
\begin{eqnarray}
\nonumber C^Q&=&\sup I^Q(X;Y,Z)\\
&=&\sup \displaystyle\sum_y p(y)\sum_{z} \Gamma_y(\vec{p}_{z}).
\label{infsup}
\end{eqnarray}
This gives us the following lower bound on the distortion $D$:
\begin{equation}
D\geq  D^Q(C^Q),
\end{equation}
where $D^Q(R)$ is the inverse function of $R^Q(D)$.
The infimum is taken over all conditional distributions $\{p(\hat{x}|x)\}$ that satisfy the distortion constraint $\mathbb{E} \rho(X,\hat{X}) \leq D$. The supremum should be taken over all scalar encoders with a fixed rate $R=\log M$. Alternatively, we can carry out a continuous optimization by taking the supremum over all sets of positive vectors $\left\{ \vec{p}_z\right\}_{z=1}^M$ that satisfy (\ref{cons1}), i.e., all conditional distributions $\{p(z|x)\}$. Whereas the original optimization problem may require exhaustive search over all encoders, and in this case, our mechanism is useless, the continuous problem may have analytic solution. The result of the continuous optimization will, of course, be greater than or equal to $C^Q$. However, the functions $\{\Gamma_y(\vec{p}_{z})\}$ might be neither convex nor concave. In this case, we can carry out the optimization using the general form of $I^Q(X;Y,Z)$ given in (\ref{importantI}), which is convex in $\vec{p}_z$.

Until now, we only handled fixed-rate codes. However, distortion lower bounds for codes created by time-sharing fixed-rate codes are readily obtained from the above. This can be seen as follows: For a given rate $R\in\mathbf{R} \triangleq \{\log 1, \log2,\ldots, \log K\}$, let $D(R)$ be the minimum distortion achievable by fixed-rate scalar codes with encoders $f: {\cal{X}}\rightarrow \{1,2,\ldots,2^R\}$ and let $\underline{D}(R)$ be a lower bound on this distortion. We construct a variable-rate code whose rate at time $t$ is $R_t$, $R_t\in \mathbf{R}$, $t=1,\ldots,n$, by time-sharing scalar fixed-rate codes, under the constraint:
\begin{eqnarray}
\ddss\frac{1}{n}\ddss\sum_{t=1}^n R_t \leq R.
\end{eqnarray}
The distortion of this time-sharing code is lower bounded by:
\begin{eqnarray}
\nonumber D&\geq& \ddss\frac{1}{n}\ddss\sum_{t=1}^{n} D(R_t)\\
\nonumber &\geq&\ddss\frac{1}{n}\ddss\sum_{t=1}^{n} \underline{D}(R_t)\\
\nonumber &\geq&\ddss\frac{1}{n}\ddss\sum_{t=1}^{n}\underline{D}^*(R_t)\\
\nonumber&\geq&  \underline{D}^*\left(\ddss\frac{1}{n}\ddss\sum_{t=1}^{n}R_t\right)\\
&\geq&  \underline{D}^*\left(R\right),
\end{eqnarray}
where $\underline{D}^*(R)$ is the lower convex envelope of the set $\{\underline{D}(R)\}_{R\in \mathbf{R}}$ and is defined by:
\begin{eqnarray}
\underline{D}^*(R)\triangleq \min\ddss\sum_{i=1}^K \beta_i \underline{D}(\log i )
\end{eqnarray}
where the minimum is taken over the following set:
\bbee
\bbaa{lllll}
\{\beta_1,\beta_2,\ldots,\beta_K: &\beta_i \geq 0& \forall i,&\ddss\sum_{i=1}^K \beta_i=1,&\ddss\sum_{i=1}^K \beta_i\log i \leq R\}
\eeaa
\eeee
We see that $\underline{D}^*(R)$ lower bounds the distortion of any such time-sharing code with rate no more than $R$. Concrete examples of this result will be given in the next section.

We end this subsection with an upper bound on $C^Q$ for the specific convex function $Q(t)=t^{1-\alpha}$, $\alpha>1$. Using this choice of $Q$ is equivalent to using the R\'{e}nyi mutual information of order $\alpha$, which is defined as \cite{Renyi}:
\begin{eqnarray}
\nonumber I^r_{\alpha}(X;Y)&\equiv&\displaystyle\frac{1}{\alpha-1}\log \displaystyle \sum_{x,y}p(x,y)\displaystyle\left[\frac{p(x)}{p(x|y)}\right]^{1-\alpha}\\
&=&\displaystyle\frac{1}{\alpha-1}\log \displaystyle I^Q(X;Y).
\end{eqnarray}
Thus, Eq. (\ref{DPT2}) can be written in the following equivalent form:
\begin{equation}
R^r_{\alpha}(d)\leq I^r_{\alpha}(X;\hat{X})\leq I^r_{\alpha}(X;Y,Z)\leq C^r_{\alpha},
\end{equation}
where
\begin{eqnarray}
R^r_{\alpha}(d)&=&\displaystyle\frac{1}{\alpha-1}\log \displaystyle R^Q(D)
\end{eqnarray}
and
\begin{eqnarray}
C^r_{\alpha}&=&\displaystyle\frac{1}{\alpha-1}\log \displaystyle C^Q.
\end{eqnarray}
The logarithmic measure is a special case, obtained for $\alpha\rightarrow 1$. Thus, optimizing over $\alpha$ can only improve the classical bounds. In addition, the function $Q(t)=t^{1-\alpha}$ is relatively convenient to work with.
\begin{theorem}
For the convex function $Q(t)=t^{1-\alpha}$, $1 < \alpha < 2$, we have the following upper bound:
\begin{equation}
C^Q\leq \displaystyle M^{\alpha-1}\sum_y \left(\sum_{x}p(x)\cdot p(y|x)^{\frac{1}{2-\alpha}} \right)^{2-\alpha}.
\label{holder1b}
\end{equation}
\label{holder1btheorem}
\end{theorem}
Notice that upper-bounding $C^Q$ will, of course, decrease the lower bounds on the distortion.
\begin{proof}
Using the function $Q(t)=t^{1-\alpha}$ in (\ref{GMI11}), we get:
\begin{eqnarray}
\nonumber I^Q(X;Y,Z) &=&\displaystyle\sum_y\sum_z\sum_{x\in A_z}p(x,y)
\left( \displaystyle\frac{\displaystyle \vec{p}_z \cdot \vec{p}_{y}}{p(y|x)}\right)^{1-\alpha}\\
\nonumber &=&\displaystyle\sum_y \sum_z\sum_{x\in A_z}
 p(x)p(y|x)^{\alpha}\cdot \displaystyle [\vec{p}_z \cdot \vec{p}_{y}]^{1-\alpha}\\
\nonumber &=&\displaystyle\sum_y \sum_z\displaystyle [\vec{p}_z \cdot \vec{p}_{y}]^{1-\alpha}\sum_{x\in A_z}
 p(x)p(y|x)^{\alpha}\\
&=&\displaystyle\sum_y\sum_z \displaystyle\left[\ddss\sum_x p(x,y)1_{x\in A_z}\right]^{1-\alpha}\left[\ddss\sum_x p(x) p(y|x)^{\alpha}1_{x\in A_z}\right].
\label{conv22}
\end{eqnarray}
In order to get an upper bound on $I^Q(X;Y,Z)$, we define:
\begin{equation}
\begin{array}{lll}
q&=&\displaystyle{1}/{(\alpha-1)}\\
r&=&\displaystyle {1}/{(2-\alpha)}\\
\end{array}
\end{equation}
and the following ${K}$-dimensional vectors:
\begin{equation}
\begin{array}{lll}
\vec{a}_y&=&\left[p(x_1,y)^{\alpha-1}\cdot 1_{1\in A_z}, p(x_2,y)^{\alpha-1}\cdot 1_{2\in A_z}, \ldots\right],\\
&&\\
\vec{b}_y&=&\left[p(x_1)^{2-\alpha}\cdot p(y|x_1)\cdot 1_{1\in A_z}, p(x_2)^{2-\alpha}\cdot p(y|x_2)\cdot 1_{2\in A_z}, \ldots\right].\\
\end{array}
\end{equation}
Applying these definitions to (\ref{conv22}), we have:
\begin{eqnarray}
I^Q(X;Y,Z)&=&\displaystyle\sum_y  \sum_z \left(\displaystyle\sum_{k=1}^{{K}} a_{y,k}^q\right)^{-1/q}\cdot\displaystyle \left(\vec{a}_y\cdot \vec{b}_y\right).
\end{eqnarray}
Assuming $\alpha$ is in the range $1<\alpha<2$, we have  $1 \leq q,r < \infty$ and $1/q+1/r=1$. Thus we can apply the H\"{o}lder inequality to each term in the sum, in the following way:
\begin{equation}
\displaystyle \left(\vec{a}_y\cdot \vec{b}_y\right)\cdot \left(\displaystyle\sum_{k=1}^{{K}} a_{y,k}^q\right)^{-1/q} \leq \left(\displaystyle\sum_{k=1}^{{K}} b_{y,k}^r\right)^{1/r} 
\end{equation}
We then have:
\begin{eqnarray}
\nonumber I^Q(X;Y,Z)
&\leq&\displaystyle\sum_y  \sum_z \left(\displaystyle\sum_{k=1}^{{K}} b_{y,k}^r\right)^{1/r} \\
\nonumber &=&\displaystyle\sum_y  \sum_z \left(\displaystyle \vec{p}_z\cdot [p(x_1)\cdot p(y|x_1)^{\frac{1}{2-\alpha}}, p(x_2)\cdot p(y|x_2)^{\frac{1}{2-\alpha}}, \ldots]      \right)^{2-\alpha}\\
\nonumber &=&\displaystyle\sum_y   \sum_z \displaystyle M\cdot \frac{1}{M}\left(\displaystyle \vec{p}_z\cdot [p(x_1)\cdot p(y|x_1)^{\frac{1}{2-\alpha}}, p(x_2)\cdot p(y|x_2)^{\frac{1}{2-\alpha}}, \ldots]        \right)^{2-\alpha}\\
\nonumber &\leq&\displaystyle\sum_y    \displaystyle M\cdot \left(\displaystyle \frac{1}{M}\sum_z \vec{p}_z\cdot [p(x_1)\cdot p(y|x_1)^{\frac{1}{2-\alpha}}, p(x_2)\cdot p(y|x_2)^{\frac{1}{2-\alpha}}, \ldots]       \right)^{2-\alpha}\\
&=&M^{\alpha-1}\displaystyle\sum_y    \displaystyle  \left(\displaystyle \sum_{x} p(x)\cdot p(y|x)^{\frac{1}{2-\alpha}} \right)^{2-\alpha},
\label{holder_bound}
\end{eqnarray}
where the second inequality is due to Jensen, using the fact that the function $q(t)=t^{2-\alpha}$ is concave for $1<\alpha<2$. The last equality follows from the constraint (\ref{cons1}).
\end{proof}
The usefulness of this result stems from its generality. It holds for any source distribution and any transition probability matrix $\{p(y|x)\}$. This result is used in Section 3, along with tighter bounds on the capacity that can be achieved in several special cases. It will also be shown that the application of this result to large alphabets yields non-trivial bounds. For $Q(t)=-\log t$, Eq. (\ref{holder1b}) is equivalent to the following:
\begin{eqnarray}
\nonumber C&\leq&\displaystyle\lim_{\alpha\rightarrow 1}\displaystyle\frac{1}{\alpha-1}\log\left(M^{\alpha-1}\displaystyle\sum_y    \displaystyle  \left(\displaystyle \sum_{x} p(x)\cdot p(y|x)^{\frac{1}{2-\alpha}} \right)^{2-\alpha}\right)\\
&=&\log M+I(X;Y),
\label{boundclassicholder}
\end{eqnarray}
where $C$ is the classical capacity.
Therefore, Eq. (\ref{holder1b}) can be viewed as generalization of (\ref{boundclassicholder}). Notice that the bound in (\ref{boundclassicholder}) can be derived easily from (\ref{wzlbtheo22}). This simple bound states that a maximum amount of information is transferred to the decoder when the output of the deterministic encoder is uniformly distributed and independent of the side information.
The proof of (\ref{boundclassicholder}) is given in Appendix B.
\subsection{The generalized rate-distortion function for the uniform source distribution}
In this subsection, we handle the left-hand side of (\ref{DPT2}), i.e., the generalized RD function. We start with a general characterization. Then, in Lemma \ref{RDgenerallemma}, we give a closed-form expression for the generalized RD function of uniformly distributed sources w.r.t. general symmetric distortion measures. Finally, in Lemma \ref{RDHamming123}, we provide an explicit expression of this function for the special case of the Hamming distortion measure. These results will be used in the next section to derive concrete lower bounds on the distortion, from the DPT we formulated in (\ref{DPT2}).

By definition of the generalized mutual information:
\begin{eqnarray}
\nonumber I^Q(X;\hat{X})&=&\displaystyle\sum_{x,\hat{x}}p(x,\hat{x})Q\left(\frac{p(\hat{x})}{p(\hat{x}|x)}\right)\\
\nonumber &=&\displaystyle\sum_{x,\hat{x}}p(x)p(\hat{x}|x)Q\left(\frac{\ddss\sum_{x'}{p(x')p(\hat{x}|x')}}{p(\hat{x}|x)}\right)\\
\nonumber &=&\displaystyle\sum_{\hat{x}}\sum_{x} p(x)p(\hat{x}|x)Q\left(\frac{\vec{p}\cdot\vec{p}_{\hat{x}}}{p(\hat{x}|x)}\right)\\
&=&\displaystyle\sum_{\hat{x}}\Psi(\vec{p}_{\hat{x}}),
\label{proofrd1}
\end{eqnarray}
where we have defined the following $K$-dimensional vectors:
\bbee
\bbaa{lll}
\vec{p}&=&[p(x), x\in {\cal{X}}]\\
\vec{p}_{\hat{x}}&=&[p(\hat{x}|x), x\in {\cal{X}}]\\
\eeaa
\eeee
and the following function:
\bbee
\Psi(\vec{p}_{\hat{x}})=\ddss\sum_{x} p(x)p(\hat{x}|x)Q\left(\frac{\vec{p}\cdot\vec{p}_{\hat{x}}}{p(\hat{x}|x)}\right).
\eeee
For any convex function $Q$, $\Psi(\vec{p}_{\hat{x}})$ is a convex function. This can be shown easily by the same method we used to prove Lemma \ref{convexityGy}. The generalized RD function is given by:
\bbee
R^Q(D)=\inf \left\{ \displaystyle\sum_{\hat{x}}\Psi(\vec{p}_{\hat{x}})\right\},
\eeee
where the infimum is taken over all conditional distributions $\{p(\hat{x}|x)\}$ under the constraint:
\begin{eqnarray}
\ddss\sum_{\hat{x}}\sum_{x} p(x)p(\hat{x}|x)\rho(x,\hat{x})\leq D.
\end{eqnarray}
Notice that $R^Q(D)$ is lower bounded by $Q(1)$. This can be shown easily by the following:
\begin{eqnarray}
\nonumber I^Q(X;\hat{X})&=&\displaystyle\sum_{x,\hat{x}}p(x,\hat{x})Q\left(\frac{p(\hat{x})}{p(\hat{x}|x)}\right)\\
\nonumber&\geq&\displaystyle Q\left(\ddss\sum_{x,\hat{x}}\frac{p(x,\hat{x})p(\hat{x})}{p(\hat{x}|x)}\right)\\
&=&Q(1)
\label{q_1_proof}
\end{eqnarray}
where the inequality is due to Jensen.
$I^Q(X;\hat{X})$ is, of course, convex in the set $\left\{ \vec{p}_{\hat{x}}\right\}_{\hat{x}=1}^{K}$. Thus, this is a standard problem of minimizing a convex function over a convex set under linear constraints. Generally, this optimization problem can be solved numerically by various algorithms (see, e.g., \cite[Chap. 3]{ConvexOptimization}). 

In the next steps we will give analytic expressions to the generalized RD function under certain conditions.
We refer to a distortion measure $\rho(x,\hat{x})$ as symmetric if the rows of the distortion matrix, $\left\{\rho(x,\hat{x})\right\}$, are permutations of each other and the columns are permutations of each other.
A uniformly distributed source is a source for which $p(x)=\displaystyle\frac{1}{{K}}$, $\forall x\in {\cal{X}}$.
\begin{lemma}
Consider a discrete source $X$, uniformly distributed over a finite alphabet ${{\cal{X}}}$, and let $Q(t)$, $0 \leq t < \infty$, be any real-valued differentiable convex function satisfying (\ref{q_o_cond}). Then, $R^Q(D)$ w.r.t. any symmetric distortion measure is given by:
\begin{equation}
R^Q(D)=\ddss\left\{\begin{array}{ll}\displaystyle\sum_{k=1}^{K} p_k\cdot Q\left(\frac{1}{K p_k}\right),& D\leq \ddss\frac{1}{K}\sum_{k=1}^K\rho_k\\
Q(1),& D>\ddss\frac{1}{K}\sum_{k=1}^K\rho_k
\end{array}\right.
\end{equation}
where $\ddss\left\{\rho_k\right\}_{k=1}^{K}$ are the elements of each row of the matrix $\left\{\rho(x,\hat{x})\right\}$ and $\ddss\left\{ p_k\right\}_{k=1}^{K}$ is a probability distribution, which is given by the following equations ($k=1,\ldots,K$):
\begin{equation}
\ddss Q\left(\frac{1}{K p_k}\right)-\displaystyle \frac{1}{K p_k} Q'\left(\frac{1}{K p_k}\right)+\lambda_1+\lambda_2 \rho_k-\mu_k=0, 
\label{KKT1}
\end{equation}
where $\lambda_1$, $\lambda_2$, $\ddss\left\{ \mu_k\right\}_{k=1}^{K}$ are constants, chosen such that:
\begin{equation}
\begin{array}{llll}
\ddss\sum_{k=1}^{K}p_k=1,&\ddss\sum_{k=1}^{K}p_k \rho_k = D&\\
\mu_k\geq 0,& \mu_k\cdot p_k=0 ,& k=1,\ldots,K
\end{array}
\label{KKT2}
\end{equation}
\label{RDgenerallemma}
\end{lemma}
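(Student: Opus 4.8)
The plan is to use the symmetry of the problem to collapse the infinite-dimensional convex program defining $R^Q(D)$ into a one-dimensional one over a single probability vector $\{p_k\}$, and then to identify the stated equations as its Karush--Kuhn--Tucker (KKT) conditions. First I would specialize (\ref{proofrd1}) to the uniform source. Since $\vec{p}=[1/K,\ldots,1/K]$, the inner product simplifies to $\vec{p}\cdot\vec{p}_{\hat{x}}=\sum_x p(\hat{x}|x)/K=p(\hat{x})$, so that
\[
I^Q(X;\hat{X})=\frac{1}{K}\sum_{x,\hat{x}}p(\hat{x}|x)\,Q\!\left(\frac{p(\hat{x})}{p(\hat{x}|x)}\right).
\]
This already isolates the two objects that the symmetry must control: the output marginal $p(\hat{x})$ and the per-row profile of the test channel.

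The heart of the argument, and the step I expect to be the main obstacle, is to show that an optimal test channel may be taken to respect the symmetry of the distortion matrix. Because the rows (resp.\ columns) of $\{\rho(x,\hat{x})\}$ are permutations of one another and the source is uniform, both the feasible set and the objective are invariant under the group $G$ of permutations of $\mathcal{X}$ that preserves $\{\rho(x,\hat{x})\}$. Given any optimal channel, I would average it over its $G$-orbit: the feasible set is convex, so the average is feasible, and since $\Psi$ (hence $I^Q$) is convex, as noted after (\ref{proofrd1}), while every $G$-image of the channel attains the same objective value, Jensen's inequality shows the averaged, $G$-invariant channel is again optimal. The transitivity of this action on $\mathcal{X}$ then forces the output marginal to be uniform, $p(\hat{x})=1/K$, and forces each row of the optimal channel to be a permutation of one common probability vector $\{p_k\}_{k=1}^K$, with $p_k$ attached to the output symbol whose distortion from the input equals $\rho_k$. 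Verifying that the hypotheses in the definition of a symmetric distortion measure genuinely supply such a transitive symmetry, and that $G$-invariance yields exactly this single-vector structure, is the delicate part of the proof.

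With this structure in hand the reduction is mechanical. On the entry carrying probability $p_k$ the ratio inside $Q$ collapses to $p(\hat{x})/p(\hat{x}|x)=1/(Kp_k)$, and since every row contributes the same sum, the $K$ outer terms cancel the factor $1/K$, giving $I^Q(X;\hat{X})=\sum_{k=1}^K p_k\,Q(1/(Kp_k))$. Likewise the distortion constraint reduces to $\sum_k p_k\rho_k=D$ and normalization to $\sum_k p_k=1$. The map $p\mapsto pQ(1/(Kp))$ is the perspective of the convex function $Q$ and is therefore convex, so the reduced problem is a convex program whose KKT conditions are both necessary and sufficient.

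Finally I would form the Lagrangian $\sum_k p_kQ(1/(Kp_k))+\lambda_1\left(\sum_k p_k-1\right)+\lambda_2\left(\sum_k p_k\rho_k-D\right)-\sum_k\mu_k p_k$ and differentiate in each $p_k$. Using $\frac{d}{dp}\left[pQ(1/(Kp))\right]=Q(1/(Kp))-\frac{1}{Kp}Q'(1/(Kp))$ — which is precisely where differentiability of $Q$ enters — the stationarity condition becomes exactly (\ref{KKT1}), while primal feasibility, dual feasibility $\mu_k\ge 0$, and complementary slackness $\mu_k p_k=0$ are precisely the conditions (\ref{KKT2}). This produces the claimed closed form, and the only genuinely nontrivial ingredient remains the symmetrization, since it is what guarantees both the uniform output and the single-vector parametrization that make the final expression so compact.
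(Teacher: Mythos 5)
Your endgame is fine: the reduction to the program $\min\sum_k p_k Q(1/(Kp_k))$ subject to $\sum_k p_k=1$, $\sum_k p_k\rho_k=D$, the convexity of the perspective $p\mapsto pQ(1/(Kp))$, and the differentiation giving stationarity (\ref{KKT1}) together with feasibility and complementary slackness (\ref{KKT2}) coincide with the last step of the paper's proof. The genuine gap is exactly the step you flagged and deferred: the existence of a \emph{transitive} symmetry group. The paper's definition of a symmetric distortion measure only requires that the rows of $\{\rho(x,\hat{x})\}$ be permutations of each other and likewise the columns; under the normalization used in the proof (each row carries $K$ distinct values), such a matrix is precisely a Latin square. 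A Latin square need not admit \emph{any} nontrivial pair of permutations $(\sigma,\tau)$ with $\rho(\sigma x,\tau\hat{x})=\rho(x,\hat{x})$ --- for large $K$ almost all Latin squares have trivial autotopism group --- so your group $G$ can consist of the identity alone. In that case orbit-averaging is vacuous: it yields neither the uniform output marginal $p(\hat{x})=1/K$ nor the single-vector row structure, and your reduction collapses. As written, your argument proves the lemma only for the subclass of distortion matrices with a transitive symmetry group (e.g., circulant matrices $\rho(x,\hat{x})=r\bigl((x-\hat{x})\bmod K\bigr)$, or Cayley tables of groups), which is strictly narrower than the statement.

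The paper avoids groups entirely. Writing $\tilde{x}_k(i)$ for the unique output at distortion $\rho_k$ from input $i$, it defines $p_k=\frac{1}{K}\sum_i p(\tilde{x}_k(i)|i)$ and applies Jensen's inequality to $Q$ separately at each distortion level $k$, averaging over the inputs $i$. The only structural fact needed to close the bound is that for each fixed $k$ the map $i\mapsto\tilde{x}_k(i)$ is a bijection of the alphabet (each value $\rho_k$ occurs exactly once in every column), whence $\sum_i p\bigl(\hat{x}=\tilde{x}_k(i)\bigr)=\sum_{\hat{x}}p(\hat{x})=1$ and Jensen gives $I^Q(X;\hat{X})\geq\sum_k p_k Q\bigl(1/(Kp_k)\bigr)$ for \emph{every} feasible channel; the bound is attained by the channel $p(\tilde{x}_k(i)|i)=p_k$, which incidentally has uniform output marginal, so no symmetrization of an optimizer is ever needed. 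To rescue your proof you would have to replace the group average by exactly this level-wise average (transporting mass within each distortion level across rows) and verify optimality via convexity of $\Psi$ --- which is, in effect, the paper's argument.
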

Notice that the equations (\ref{KKT1}) are decoupled, thus each $p_k$ can be calculated separately. The proof of Lemma \ref{RDgenerallemma} is given in Appendix C.
\\
$\mathbf{Example.}$ Taking $Q(t)=t^{-s}$, we get:
\begin{equation}
 (s+1)p^s_k+\lambda_1+\lambda_2 \rho_k-\mu_k=0,
\end{equation}
which is equivalent to the following:
\begin{equation}
p_k=c(\mu_k+1-\lambda\rho_k)^{\frac{1}{s}},
\end{equation}
where specific value of $\lambda$ matches to a point on the generalized RD curve and $c$ is a normalization factor.

For the Hamming distortion measure, defined by:
\begin{equation}
\rho(x,\hat{x})=\left\{ \begin{array}{ll} 0 & x=\hat{x} \\ 1 & x\neq \hat{x}        \end{array} \right.
\label{HammDis}
\end{equation}
we have the following closed-form expression:
\begin{lemma}
Consider a discrete source $X$, uniformly distributed over a finite alphabet ${{\cal{X}}}$, and let $Q(t)$, $0 \leq t < \infty$ be any real-valued convex function satisfying (\ref{q_o_cond}). Then, $R^Q(D)$ w.r.t. the Hamming distortion measure is given by:
\begin{equation}
R^Q(D)=\left\{\begin{array}{ll}\displaystyle (1-D) \cdot Q\left[\frac{1}{K(1-D)}         \right]+\displaystyle D\cdot Q\left[   \frac{   {K}-1 }{{KD}}\right],
& D\leq \frac{K-1}{K}\\
Q(1),& D> \frac{K-1}{K}
\end{array}\right.
\label{RDF}
\end{equation}
\label{RDHamming123}
\end{lemma}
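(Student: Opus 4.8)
The plan is to prove the formula by direct convex optimization rather than by merely specializing Lemma~\ref{RDgenerallemma}, since the latter assumes $Q$ differentiable whereas here $Q$ is only required to be convex. Recall from (\ref{proofrd1}) that $R^Q(D)=\inf\sum_{\hat x}\Psi(\vec p_{\hat x})$, the infimum being over all channels $\{p(\hat x|x)\}$ with $\sum_{x,\hat x}p(x)p(\hat x|x)\rho(x,\hat x)\le D$, and that this objective is convex in the collection $\{\vec p_{\hat x}\}$. So we are minimizing a convex functional over a convex (polyhedral) feasible set, and the whole difficulty is to exhibit the minimizer in closed form.

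The key step is a symmetrization argument exploiting the invariance of the problem under the symmetric group $S_K$. For a permutation $\pi$ of $\mathcal X$, let $p^\pi(\hat x|x)\triangleq p(\pi\hat x|\pi x)$. Because the source is uniform and the Hamming measure satisfies $\rho(\pi x,\pi\hat x)=\rho(x,\hat x)$, both the objective $I^Q$ and the distortion are invariant under $\pi$; in particular the output marginal transforms as $p^\pi(\hat x)=p(\pi\hat x)$, and relabeling the summation indices in (\ref{proofrd1}) gives $\sum_{\hat x}\Psi(\vec p^\pi_{\hat x})=\sum_{\hat x}\Psi(\vec p_{\hat x})$. Averaging over the group, set $\bar p(\hat x|x)=\frac{1}{K!}\sum_\pi p^\pi(\hat x|x)$. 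By convexity of the objective (Jensen), $\sum_{\hat x}\Psi(\bar{\vec p}_{\hat x})\le\frac1{K!}\sum_\pi\sum_{\hat x}\Psi(\vec p^\pi_{\hat x})=\sum_{\hat x}\Psi(\vec p_{\hat x})$, while the distortion is unchanged because it is linear. Hence there is an optimal channel that is constant on the two orbits of the diagonal $S_K$-action on $\mathcal X\times\mathcal X$, i.e.\ $\bar p(\hat x|x)=p_0$ for $\hat x=x$ and $\bar p(\hat x|x)=p_1$ for $\hat x\ne x$.

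This collapses the optimization to a single free parameter. Normalization forces $p_0+(K-1)p_1=1$, and the distortion equals $(K-1)p_1$; setting this to $D$ (the constraint is active because $R^Q$ is non-increasing in $D$ on the relevant range $0\le D\le 1-1/K$) yields $p_1=D/(K-1)$ and $p_0=1-D$. The output marginal is then $p(\hat x)=\frac1K[p_0+(K-1)p_1]=\frac1K$, so $\vec p\cdot\vec p_{\hat x}=p(\hat x)=1/K$ for every $\hat x$. Substituting into $\Psi$ and summing the $K$ identical terms gives
\begin{eqnarray*}
R^Q(D)&=&p_0\,Q\!\left(\frac{1}{Kp_0}\right)+(K-1)p_1\,Q\!\left(\frac{1}{Kp_1}\right)\\
&=&(1-D)\,Q\!\left(\frac{1}{K(1-D)}\right)+D\,Q\!\left(\frac{K-1}{KD}\right),
\end{eqnarray*}
which is exactly (\ref{RDF}); at the endpoint $D=0$ the second term is read off via the convention $0\cdot Q(r/0)=0$.

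I expect the main obstacle to be the rigorous justification of the symmetrization step — specifically, verifying cleanly that $I^Q$ is $S_K$-invariant and that the group average cannot increase it, and then arguing that the distortion constraint is met with equality (equivalently, that the one-parameter profile displayed above is non-increasing on $[0,1-1/K]$, which I expect to follow from $Q$ being convex and non-increasing). Everything after the reduction to the two-value channel is a routine substitution. As a consistency check one can specialize to $Q(t)=-\log t$ and recover the classical $R(D)=\log K-H_b(D)-D\log(K-1)$, and one can confirm that the stationarity conditions (\ref{KKT1})--(\ref{KKT2}) of Lemma~\ref{RDgenerallemma}, which involve only two distinct values of $\rho_k$ (a single $0$ and $K-1$ ones), are solved by the same $p_0,p_1$ whenever $Q$ is differentiable.
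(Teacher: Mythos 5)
Your proof is correct, and it reaches (\ref{RDF}) by a genuinely different route than the paper. The paper's Appendix D works with an \emph{arbitrary} feasible channel, splits $I^Q(X;\hat X)$ into the diagonal terms and the off-diagonal terms, applies Jensen's inequality once to each of the two weighted sums (using only convexity of $Q$), and then verifies that the symmetric channel $p(\hat x|x)=1-D$ for $\hat x=x$, $D/(K-1)$ otherwise, attains the resulting lower bound. You instead apply Jensen in \emph{channel space}: averaging over the $S_K$-action and invoking convexity of $\sum_{\hat x}\Psi(\vec p_{\hat x})$ in $\{p(\hat x|x)\}$ (a fact the paper asserts right after (\ref{proofrd1}) via the perspective argument of Lemma \ref{convexityGy}; note the boundary case of zero entries is covered by the convention $0\cdot Q(r/0)=0$ together with $\lim_{t\to 0}tQ(1/t)=0$, which makes the closed perspective convex) to reduce to the two-orbit family, then minimize in one parameter. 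What each buys: the paper's argument is more elementary and self-contained, needing nothing beyond convexity of $Q$; yours isolates \emph{why} the symmetric channel is optimal, would generalize verbatim to any distortion measure invariant under the diagonal $S_K$-action, and --- importantly --- makes explicit the active-constraint issue that the paper's proof leaves implicit (the paper bounds $I^Q$ in terms of the channel's \emph{exact} distortion, so strictly one still needs monotonicity of the bound to pass to the infimum over distortions $\le D$). The one step you flagged as an obstacle closes cleanly: your one-parameter profile $\phi(d)=(1-d)Q\bigl(\tfrac{1}{K(1-d)}\bigr)+dQ\bigl(\tfrac{K-1}{Kd}\bigr)$ is the restriction of the convex functional $I^Q$ to an affine segment of channels, hence convex in $d$; since $I^Q\ge Q(1)$ always (by Jensen, as $\sum_{x,\hat x}p(x,\hat x)\,p(\hat x)/p(\hat x|x)=1$) and $\phi(1-1/K)=Q(1)$ at the uniform channel, $\phi$ attains its minimum at the right endpoint of $[0,1-1/K]$ and is therefore non-increasing there, so the constraint is met with equality as you claimed.
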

Notice that Lemma \ref{RDHamming123} does not require the differentiability of the convex function $Q$.
The proof is given in Appendix D.

The general form of $R^Q(D)$ enables the use of any convex function $Q$. These results make the Ziv-Zakai mechanism much more tractable, at least for the case of uniform sources. In addition, they provide direct solutions for a broad class of classical RD functions.  We use these results in the next section to derive non-trivial bounds on the distortion of scalar coding in several cases.
\section {Applications}
In this section, we use the results of the Section 2 to derive lower bounds on the distortion in several cases. Non-trivial bounds are obtained using the convex function $Q(t)=t^{1-\alpha}$, $\alpha>1$, which was mentioned above.
We assume that the source is uniformly distributed. 
Under these conditions, Eq. (\ref{funcG}) becomes:
\begin{equation}
\Gamma_y(\vec{p}_{z})={K}^{\alpha-1}\frac{\vec{p}_z \cdot \vec{p}^{\alpha}_{y}}{(\vec{p}_z \cdot \vec{p}_{y})^{\alpha-1}},
\label{G_func}
\end{equation}
where we have defined the following ${K}$-dimensional vectors:
\begin{equation}
\begin{array}{lll}
\vec{p}_{y}&=&[p(y|x),x\in {\cal{X}}],\\
&&\\
\vec{p}^{\alpha}_{y}&=&[p(y|x)^{\alpha},x\in {\cal{X}}].\\
\end{array}
\end{equation}
Applying (\ref{RDF}) and (\ref{G_func}) to (\ref{DPT2}), we get:
\begin{eqnarray}
\nonumber R^Q(D)&=&\displaystyle {{K}}^{\alpha-1}\left[(1-d)^{\alpha}+
\displaystyle \frac{d^{\alpha}} {\left({K}-1\right)^{\alpha-1}}\right]\\
\nonumber &\leq& {K}^{\alpha-2} \sup\left\{ \displaystyle\sum_{y,z}\displaystyle\frac{\vec{p}_z \cdot \vec{p}^{\alpha}_{y}}{(\vec{p}_z \cdot \vec{p}_{y})^{\alpha-1}}\right\}\\
&=&C^Q,
\label{RDB2}
\end{eqnarray}
where the supremum is taken over all sets of positive vectors $\left\{ \vec{p}_z\right\}_{z=1}^M$ that satisfy (\ref{cons1}), in order to carry out continuous optimization. It is easy to see that the optimization is done over a convex set. The functions $\Gamma_y(\vec{p}_{z})$ are neither convex nor concave in this case, but we can use the general form of $I^Q(X;Y,Z)$ given in (\ref{importantI}). By simple substitution under the above conditions, $G_{y}(\vec{p}_z)$ has the following form:
\begin{equation}
G_{y}(\vec{p}_z)={K}^{\alpha-1}\frac{\vec{p}_z^{\alpha} \cdot \vec{p}^{\alpha}_{y}}{(\vec{p}_z \cdot \vec{p}_{y})^{\alpha-1}},
\eeee
where $\vec{p}_z^{\alpha}$ is the vector obtained from $\vec{p}_z$ by raising each element to the power of $\alpha$. Clearly, $G_{y}(\vec{p}_z)=\Gamma_y(\vec{p}_{z})$ for any deterministic encoder. The functions $\{G_{y}(\vec{p}_z)\}$ are convex as shown in Lemma \ref{convexityGy}. Therefore, the supremum of $I^Q(X;Y,Z)$ is attained on the extreme points of the convex set. Finding the supremum requires searching over all such points, i.e, over all sets of binary vectors $\left\{ \vec{p}_z\right\}_{z=1}^M$ that satisfy (\ref{cons1}). The meaning is, of course, returning to discrete optimization and performing it over all deterministic encoders. Seemingly, this makes the mechanism above useless. However, at least for some cases, $C^Q$ can be calculated directly, as shown in the following examples. In addition, we can upper bound $C^Q$ by using (\ref{holder1b}). This upper bound may give us non-trivial bounds, as shown in Example 2. It is also shown to be very useful when handling large alphabets, as demonstrated in Section 3. Finally, notice that optimizing over $\alpha$, separately for each rate, will produce the best lower bound on the achievable distortion at this rate.
\subsubsection*{Example 1}
The symmetric DMC is defined by:
\begin{equation}
p(y|x)=\left\{       \begin{array}{cl} \mu & y=x \\ \epsilon & \text{else}  \end{array} \right.
\label{symchan}
\end{equation}
where $\mu,\epsilon \in [0,1]$, $\mu>\epsilon$, and $\mu+({K}-1)\epsilon=1$.
The distortion measure we use is the Hamming distortion, defined in (\ref{HammDis}). In these conditions, the minimal achievable distortion of a scalar source code with a fixed-rate $R=\log M$, is:
\begin{equation}
D(M)=\epsilon ({K}-M).
\label{th1mindis}
\end{equation}
The proof is given in Appendix E. Knowing the best achievable distortion in this case, we can compare it to the bounds we get from (\ref{RDB2}) to examine their quality.
The generalized capacity (\ref{infsup}) for this channel is given by:
\begin{eqnarray}
\nonumber C^Q &=&{K}^{\alpha-2}\cdot \sup\left\{ \displaystyle\sum_{y,z}\displaystyle\frac{\vec{p}_z \cdot \vec{p}^{\alpha}_{y}}{(\vec{p}_z \cdot \vec{p}_{y})^{\alpha-1}}\right\}\\
\nonumber &=&{K}^{\alpha-2}\epsilon\cdot\sup\left\{\displaystyle\sum_z M_z\cdot \displaystyle\frac{\left(M_z+\mu^{\alpha}/\epsilon^{\alpha}-1 \right)}{\left(M_z+\mu/\epsilon-1\right)^{\alpha-1} }\right.\\
\nonumber &&+\Bigg.({K}-M_z)M_z^{2-\alpha}\Bigg\}\\
&=& {K}^{\alpha-2}\epsilon\cdot \sup\left\{\displaystyle\sum_z q_{\alpha}(M_z)\right\},
\label{q_a_def}
\end{eqnarray}
where $M_z$,  $M_z \in \{1,\ldots, {K}-M+1\}$, is the cardinality of $A_z$, i.e., the number of source symbols that are encoded to $z$. Obviously, $\sum_z M_z= {K}$.
Notice that the supremum is taken over all deterministic encoders, where each encoder is represented by a specific set $\{\vec{p}_z\}_{z=1}^M$ as defined in (\ref{pzdef}). The second equality is proved in Appendix F. The function $q_{\alpha}(M_z)$ is concave for $1<\alpha \leq 2$, and may be concave also out of this range, with dependence on the channel parameters, as shown in Appendix F. When $q_{\alpha}(M_z)$ is concave, we can bound the supremum by taking equal $M_z$'s, i.e., $M_z={K}/M$, $\forall z$, and we get:
\begin{eqnarray}
\nonumber C^Q \leq &
 {K}^{\alpha-1} \epsilon\Bigg[\displaystyle  \displaystyle\frac{\left({K}/M+\mu^{\alpha}/\epsilon^{\alpha}-1 \right)}{\left({K}/M+\mu/\epsilon-1\right)^{\alpha-1} }\Bigg.\\
\nonumber &\\
&\Bigg.+(M-1)\left(\displaystyle\frac{{K}}{M}\right)^{2-\alpha}\Bigg].
\label{lblastchan1}
\end{eqnarray}
If $M$ divides ${K}$, this bound is achieved by any feasible encoder that partitions the source alphabet into equally sized subsets, thus the optimization is exact. The lower bounds on the distortion are obtained by combining (\ref{RDF}) and (\ref{lblastchan1}).
An example for specific values of $\mu$ and $\epsilon$ is presented in Fig. \ref{figb1}. The bound is compared with the bound obtained from the classical DPT (\ref{wzlbtheo22}), the bound obtained from the classical inequality $R_{WZ}(D)\leq \log M$, the bound obtained by using (\ref{holder1b}) and the exact solution of Eq. (\ref{th1mindis}). The WZ RD function was calculated using the Blahut-Arimoto-type algorithm presented in \cite{BAalg}.
Eq. (\ref{RDB2}) was optimized over $\alpha$, for each $M\leq {K}$, so as to get the best lower bound on the distortion.
\begin{figure}[hc]
	\centering		\includegraphics[width=3.5in, height=2in]{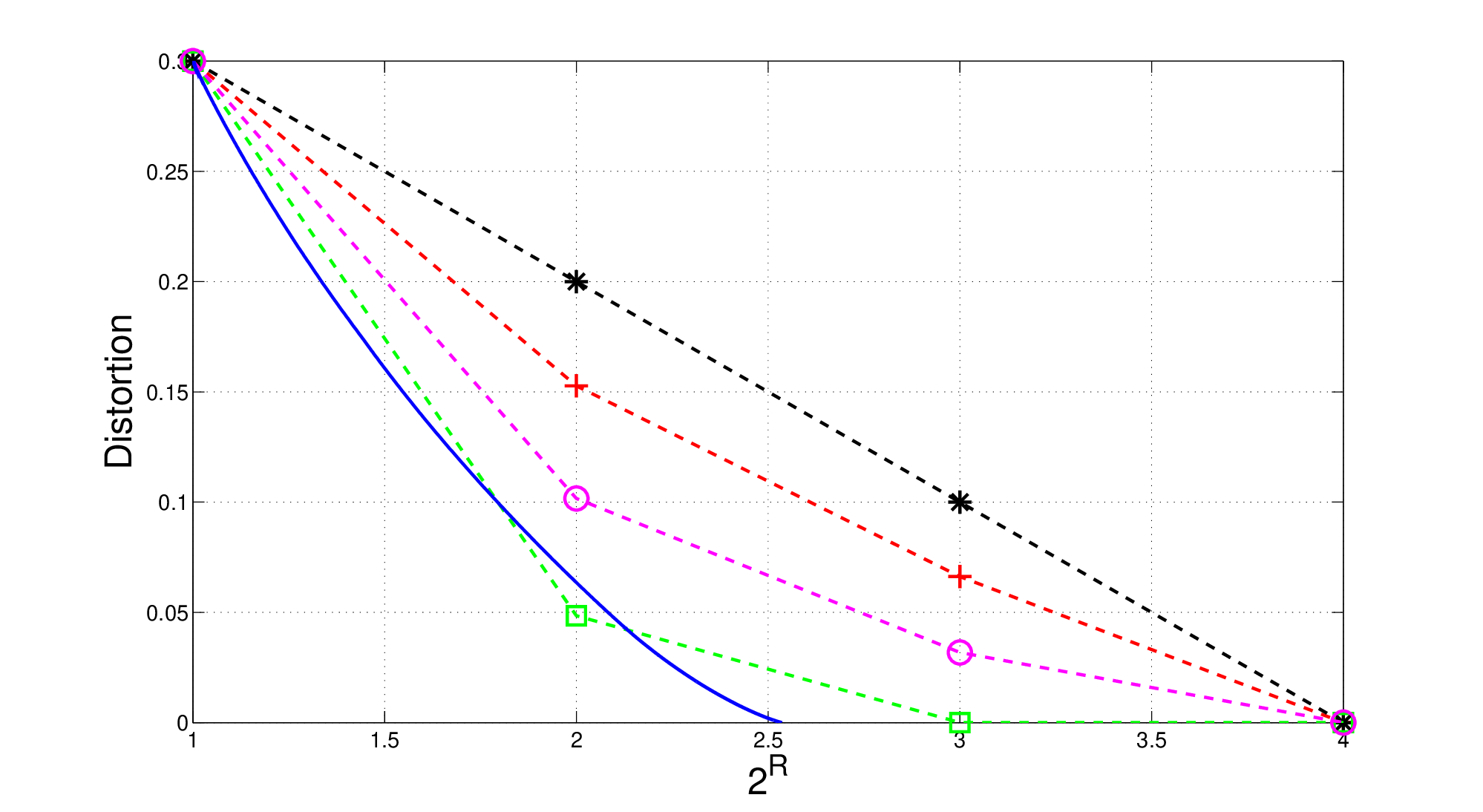} 
	\caption {${K}=4$, $\mu=0.7$. Plus - the lower bound obtained from (\ref{lblastchan1}). Circle - the lower bound obtained from the classical DPT (\ref{wzlbtheo22}). Star - the exact solution. Solid line - the lower bound obtained from $R_{WZ}(D)$. Square - the lower bound obtained from (\ref{holder1b}).}
\label{figb1}
\end{figure}
We see that even the classical DPT gives us non-trivial lower bounds and that the lower bound obtained from (\ref{lblastchan1}) is much better than the trivial bound obtained from $R_{WZ}(D)$. The lower bound obtained from (\ref{holder1b}) is not useful in this case.
There is a gap between the exact solution and the best bound, even for $M=2$, where the optimization (\ref{infsup}) is exact.
\subsubsection*{Example 2}
The symmetric DMC is defined by:
\begin{equation}
p(y|x)=\left\{       \begin{array}{cl} 1/l & y \in \{x\bmod {K},\ldots,(x+l-1)\bmod {K}\}\\ 0 & \text{else}  \end{array} \right.
\label{symchan2}
\end{equation}
where $l$ is an integer, $0<l<{{K}}$, and ${K}\bmod {K}$ is defined to be ${K}$.
Given an input $x$, the channel produces one of $l$ values with equal probability.
The generalized capacity (\ref{infsup}) for this channel is given by:
\begin{eqnarray}
\nonumber C^Q &=& {K}^{\alpha-2}\cdot \sup\left\{ \displaystyle\sum_{y,z}\displaystyle\frac{\vec{p}_z \cdot \vec{p}^{\alpha}_{y}}{(\vec{p}_z \cdot \vec{p}_{y})^{\alpha-1}}\right\}\\
\nonumber &=& {K}^{\alpha-2} \cdot\sup\left\{ \displaystyle\sum_{y,z}\displaystyle\frac{M_{y,z}(1/l)^{\alpha}}{\left(M_{y,z}(1/l)\right)^{\alpha-1}}\right\}\\
&=& {K}^{\alpha-2}\cdot l^{-1}\cdot\sup\left\{ \displaystyle\sum_{y,z}M_{y,z}^{2-\alpha}\right\},
\label{capchan1}
\end{eqnarray}
where $M_{y,z}=l\cdot[\vec{p}_z\cdot \vec{p}_{y}]$. 
It is easy to see that $\sum_{z} M_{y,z}= l$. For $1 < \alpha < 2$, the function $M_{y,z}^{2-\alpha}$ is concave in $M_z$. Thus, the supremum is achieved by setting $M_{y,z}=l/M$, $\forall \{y,z\}$:
\begin{equation}
C^Q ={K}^{\alpha-2}\cdot l^{-1}\cdot {K} \cdot M\cdot (l/M)^{2-\alpha}={K}^{\alpha-1}\cdot (M/l)^{\alpha-1}.
\label{CQEXAMPLE2}
\end{equation}
If $M$ divides $l$, equal $M_{y,z}$'s can be obtained by the following feasible encoder:
\begin{equation}
z=f(x)=1+x\bmod{M}.
\label{encdef}
\end{equation}
Therefore, in this case, the optimization is exact. For $\alpha>2$, $C^Q$ is infinite, because we can always set some $M_{y,z}$ to $0$ by an appropriate choice of the encoder. Thus, this range of $\alpha$ does not lead to a useful bound. The lower bounds on the distortion are obtained by combining (\ref{RDF}) and (\ref{CQEXAMPLE2}). An example for specific values of $K$ and $l$ is presented in Fig. \ref{figb2}. The lower bound on the distortion, which coincides with the bound obtained from (\ref{holder1b}) (the upper bound on $C^Q$ is tight for this channel), is compared with the bound obtained from the classical DPT (\ref{wzlbtheo22}) and the bound obtained by the classical inequality $R_{WZ}(D)\leq \log M$. Eq. (\ref{RDB2}) was optimized over $\alpha$, for each $M\leq {K}$, so as to get the best lower bound on the distortion. We see that in this case, the generalized DPT leads to bounds that are better than the trivial bound, whereas the classical DPT does not lead to a useful bound. We also present the exact distortion of the encoder defined in (\ref{encdef}), which is, of course, an upper bound on the distortion. Thus, the distortion of the optimal encoder must be in the range between this upper bound and our highest lower bound.
\begin{figure}[hc]
	\centering		
	\includegraphics[width=3.5in, height=2in]{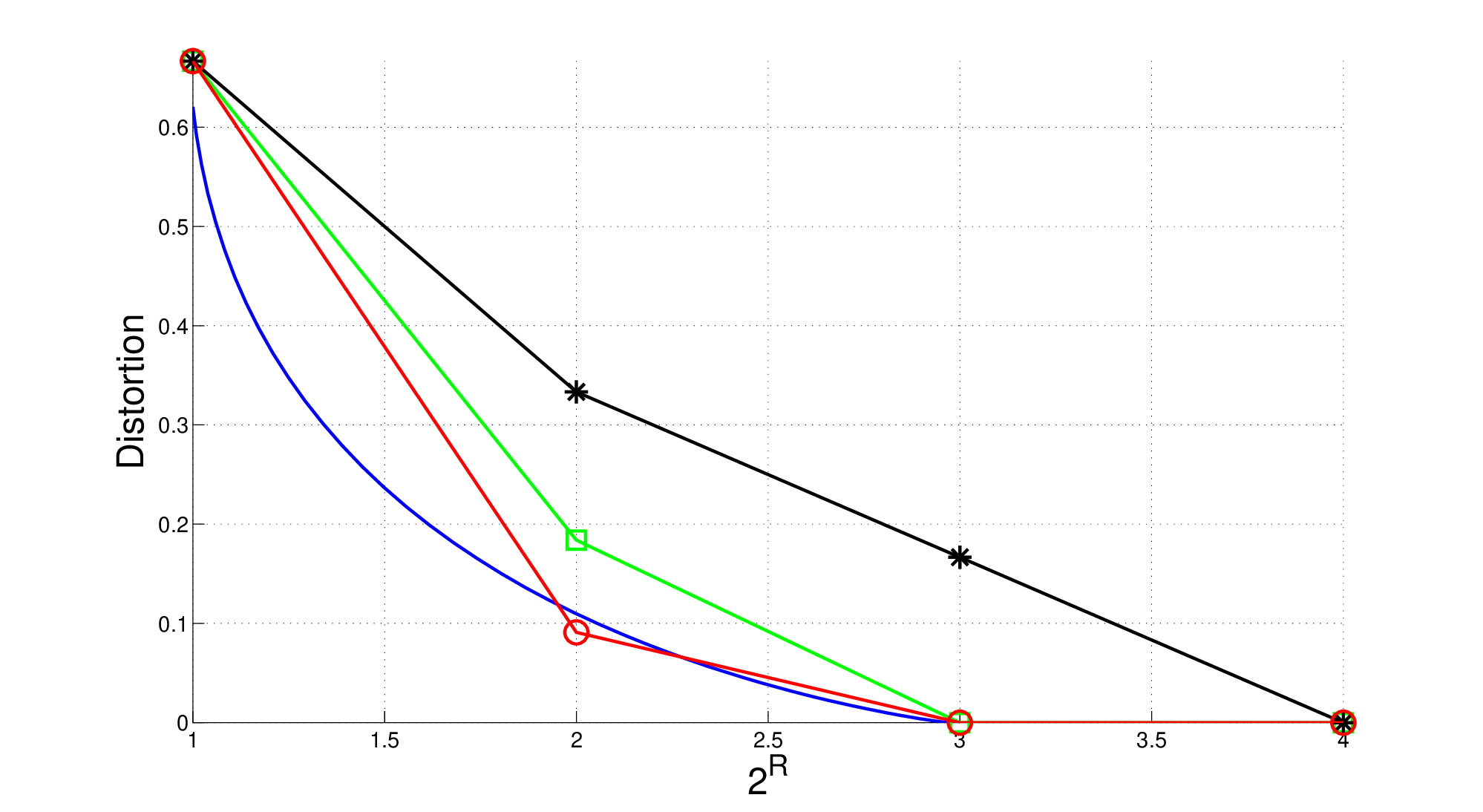} 
	\caption {${K}=4$, $l=3$. Square - the lower bound obtained from (\ref{lblastchan1}) and (\ref{holder1b}). Circle - the lower bound obtained from the classical DPT (\ref{wzlbtheo22}). Solid line - the lower bound obtained from $R_{WZ}(D)$. Star - the exact distortion of the encoder (\ref{encdef}).}
\label{figb2}
\end{figure}
\subsubsection*{Large alphabets}
In this part we show that as the alphabet size increases, we obtain interesting bounds on the performance of scalar coding. These useful bounds can be obtained for a large variety of channels, without any symmetry requirements. The results are obtained using the upper bound on the ``capacity'', presented in Lemma \ref{holder1btheorem}. This bound becomes tighter as the alphabet size increases, for various channels. For these channels, we can get close to the last upper bound in Eq. (\ref{holder_bound}), i.e., to achieve $\ddss\{\displaystyle \vec{p}_z\cdot [p(x_1)\cdot p(y|x_1)^{\frac{1}{2-\alpha}}, p(x_2)\cdot p(y|x_2)^{\frac{1}{2-\alpha}}, \ldots]\}_{z=1}^M$ which are almost equal to each other, by a suitable choice of encoder. This is because $\{p(y|x)\}$ is composed of large number of probabilities with small values.
Using the bound of Lemma \ref{holder1btheorem}, we bypass the problem of optimizing the capacity for general channels. As was mentioned earlier in Subsection 2.3, this optimization is in general a convex maximization problem which requires searching over all possible encoders. Using our lower bounds along with simple upper bounds, we give the performance range for scalar coding. These results are, of course, interesting from the practical point of view. 

In the following examples we assume that the sources are uniform. This is because analytic expression for the generalized RD function of general sources is not available. We use the Hamming distortion measure for convenience. Bounds for more general distortion measures can be calculated using the result of Lemma \ref{RDgenerallemma}. In the two former examples, we compare our results to the WZ RD function. This function was calculated using the algorithm presented in \cite{BAalg}. However, the computational complexity of this algorithm is of order ${K}^{K}$. Therefore, this algorithm is not practical for large alphabets. Since no other efficient algorithms are known, the computation of the WZ RD function for large alphabets is problematic. As a result, even the trivial bound obtained from $R_{WZ}(D)\leq \log M$ does not lead to a closed-form expression. This makes our results even more interesting. 

Instead of comparing our distortion bounds to the bounds obtained from $R_{WZ}(D)\leq \log M$, we compare them to the following linear function:
\bbee
\overline{D}(R)=D_{max} \left(1-\ddss\frac{R}{H(X|Y)}\right),
\eeee
where $D_{max}=R^{-1}_{WZ}(0)$, i.e., the lowest achievable distortion for rate $R=0$. 
The function $\overline{D}(R)$ is simply the straight line obtained by time-sharing the two known endpoints of the $R^{-1}_{WZ}(R)$ curve, $(0,D_{max})$ and $(H(X|Y),0)$.  This line is, of course, a trivial upper bound on $R^{-1}_{WZ}(R)$. As an upper bound on the best achievable distortion of a fixed-rate code, we use the performance of the code composed of the encoder defined in (\ref{encdef}), along with the corresponding optimal decoder, given by:
\begin{equation}
\hat{x}=g(z,y)=\displaystyle\argmax_{x\in A_z} \{ p(y|x) \}.
\end{equation} 
Remember that the optimal decoding strategy is maximum likelihood because we use the Hamming distortion measure. The choice of the encoder (\ref{encdef}) seems natural when handling channels with transition probabilities that decrease with the distance. In this case, we want adjacent symbols to be in different subsets of the encoder. Obviously, any code that performs better will improve the performance range. 

In the first example, the DMC is defined by:
\begin{equation}
p(y|x)=c_x\exp\left\{-\ddss\frac{(x-y)^2}{2K^2\sigma^2}\right\},
\label{chanlargealphabet1}
\end{equation}
where $c_x$ is a normalization factor such that $\sum_y p(y|x)=1$. This is a 'Gaussian'-like channel. Notice that this channel is not symmetric. Performance ranges for different values of $K$ are presented in Fig. 4. We see that we get bounds that are higher than $\overline{D}(R)$ and therefore, higher than $R^{-1}_{WZ}(R)$. These bounds also show that the time-sharing of $\overline{D}(R)$ performs better than any fixed-rate code for considerable range of rates. Similar results can be presented for various channels, not necessarily additive.

In the second example, the DMC is the same as in Example 2 and is defined by (\ref{symchan2}). In this case, $H(X|Y)=\log l$. We now present the performance range for large alphabets where in all cases we take $l=K/4$. The results are shown in Fig. 5. We see that our bounds are higher than $\overline{D}(R)$ for $R=\log M>\log3$. As the alphabet size increases, the gap between our bound and $\overline{D}(R)$ also increases.

In all examples, one can easily notice that the lower convex envelope of our lower bounds, is very close to the straight line $\overline{D}(R)$. As was shown in Section 2, this convex envelope is a lower bound on the distortion of time-sharing fixed-rate codes.

\begin{figure}[ht]
 \centering
 \subfigure[$K=64$, $\sigma=0.5$]{
  \includegraphics[width=3.5in, height=2.5in]{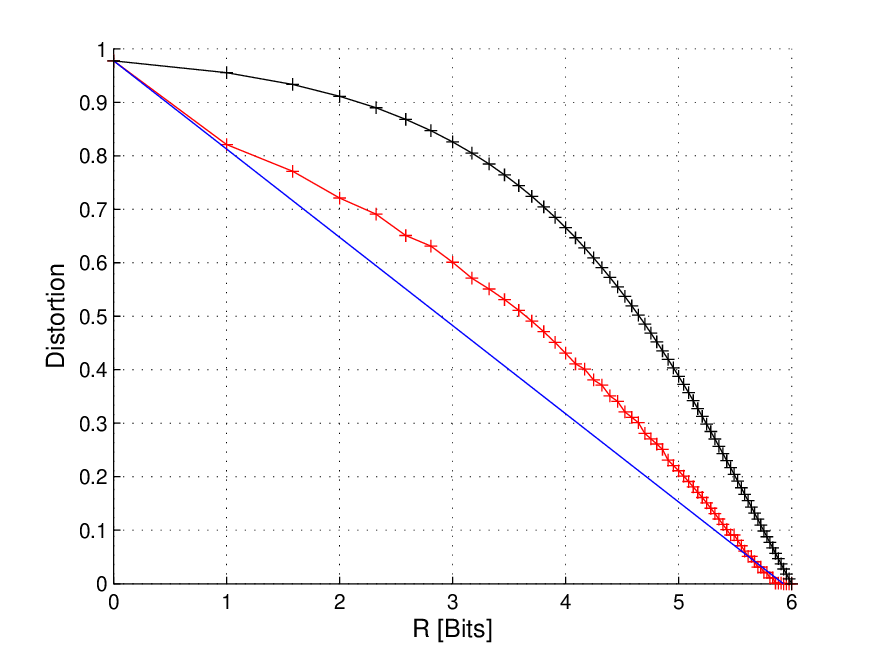}
   \label{fig:subfig1}
   }
 \subfigure[$K=128$, $\sigma=0.5$]{
  \includegraphics[width=3.5in, height=2.5in]{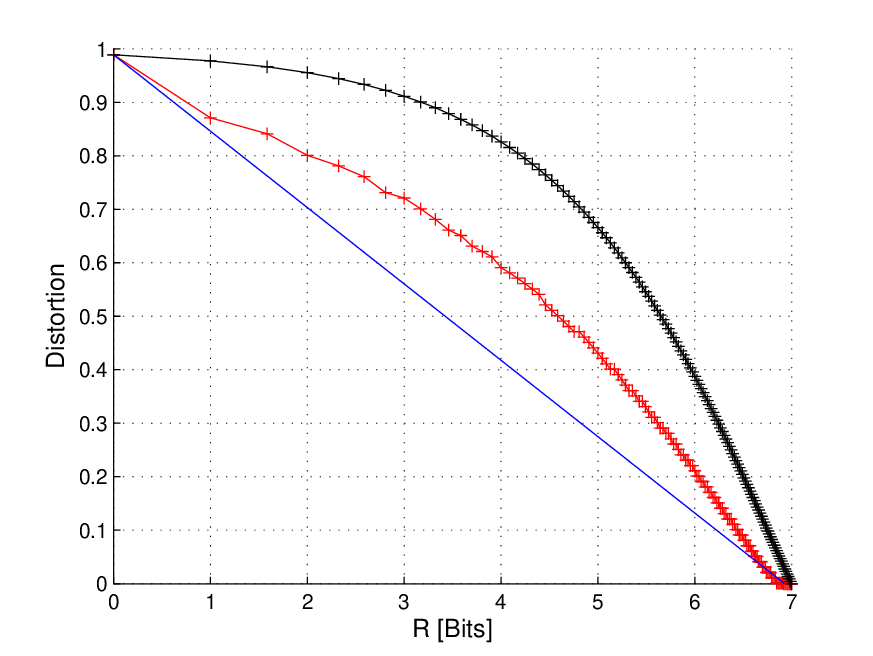}
   \label{fig:subfig2}
   }
 \subfigure[$K=256$, $\sigma=0.5$]{
  \includegraphics[width=3.5in, height=2.5in]{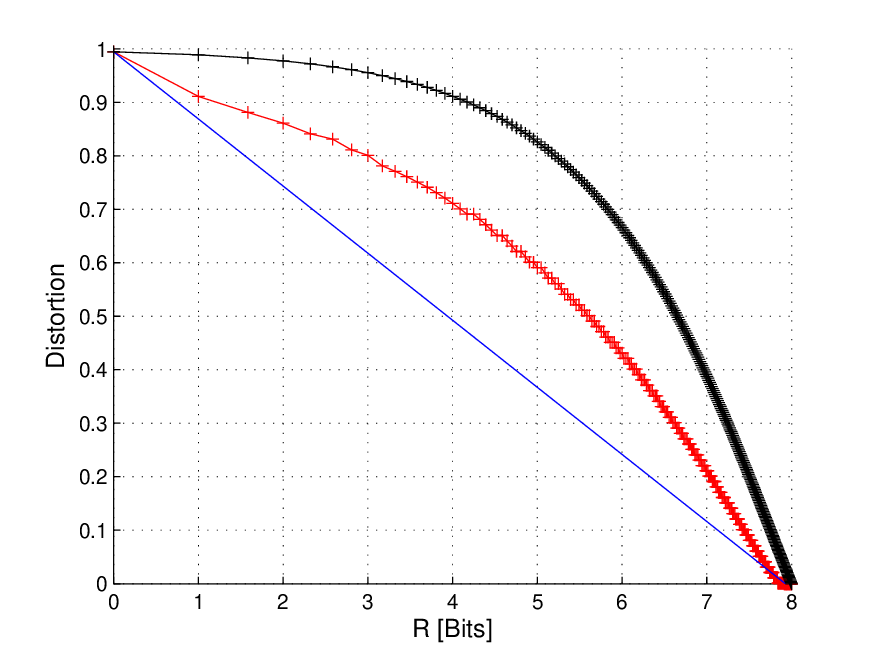}
   \label{fig:subfig3}
   }
 \label{largealphabet}
 \caption[]{The performance range for the channel (\ref{chanlargealphabet1}) and different sizes of alphabets. Straight line - $\overline{D}(R)$. Lower curve - our lower bound. Higher curve - the distortion of the code (\ref{encdef}).}
\end{figure}
\begin{figure}[ht]
 \centering
 \subfigure[$K=64$, $\l=16$]{
  \includegraphics[width=3.5in, height=2.5in]{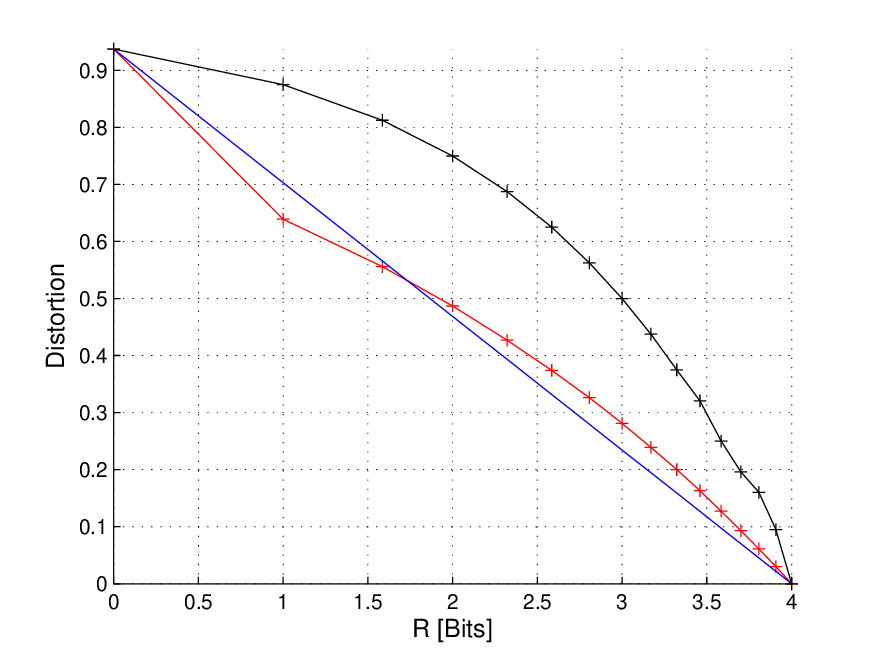}
   \label{fig:subfig1}
   }
 \subfigure[$K=128$, $\l=32$]{
  \includegraphics[width=3.5in, height=2.5in]{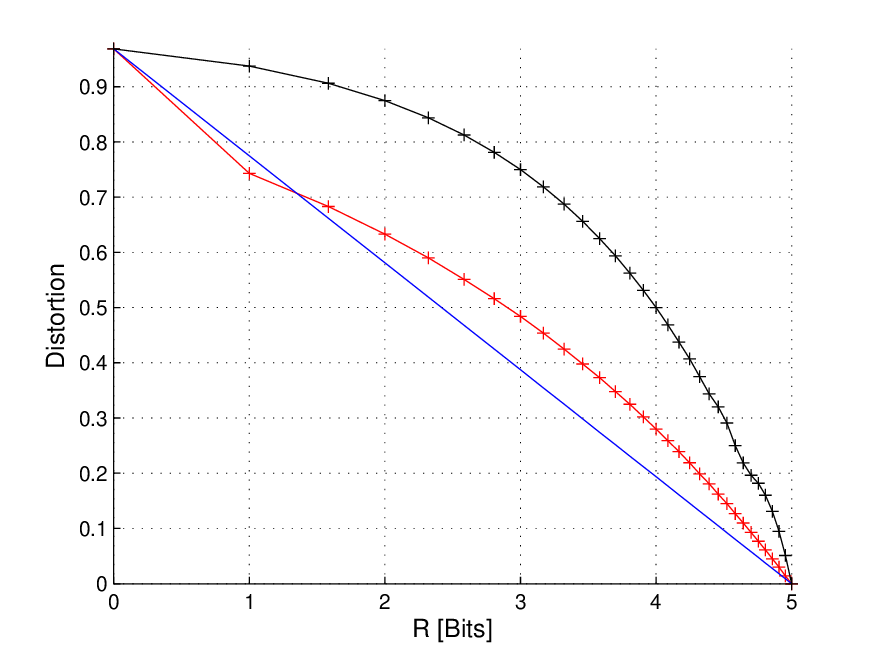}
   \label{fig:subfig2}
   }
 \subfigure[$K=256$, $\l=64$]{
  \includegraphics[width=3.5in, height=2.5in]{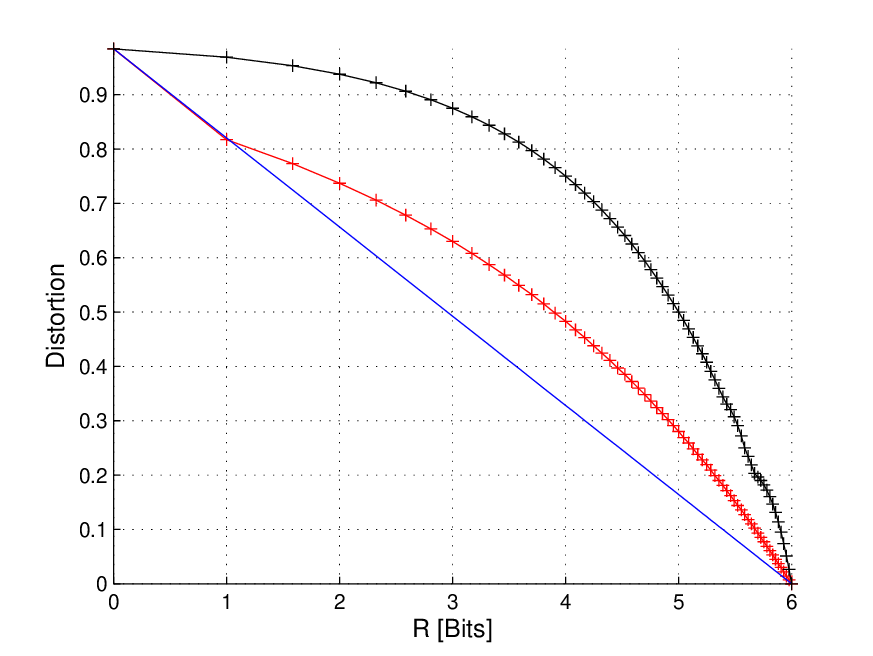}
   \label{fig:subfig3}
   }
 \label{largealphabet}
 \caption[]{The performance range for the channel (\ref{symchan2}) and different sizes of alphabets and $l$'s. Straight line - $\overline{D}(R)$. Lower curve - our lower bound. Higher curve - the distortion of the code (\ref{encdef}).}
\end{figure}
\clearpage
\section{Future Directions}
Clearly, the results of these paper are relevant from the practical point of view and so are their possible extensions. Analytic expressions for the generalized RD function of general sources are not apparent to be available. Improving the bounds above by using, for example, the techniques of \cite{Merhav1}, is yet to be explored. In the next step, a possible direction will be to extend our setting to more general scenarios. The first interesting scenario is the variable-rate coding case, where $Z$ is encoded by a variable-length code. In this setting, 
the generalized capacity (\ref{infsup}) will be maximized under the constraint $\mathbb{E}\{L(Z)\}\leq R$, where $L(Z)$ is the length of the codeword $Z$. The best variable-rate code that can be used is the Huffman code for $p(z)$, provided that $p(x,y)>0$ for all $x,y\in{\cal{X}}$. The challenge is to perform this optimization.
Another interesting scenario is when the output of the encoder is transmitted over a noisy channel (instead of the noiseless one in the WZ setting). In this case, the generalized capacity will be of the form $I^Q(X,Z;Z',Y)$, where $Z'$ is the output of the encoder noisy channel. Again, the challenge will be to optimize this capacity. 
\newpage
\section*{Appendix A - Proof of Lemma \ref{convexityGy}}
\renewcommand{\theequation}{A.\arabic{equation}}
\setcounter{equation}{0}
\begin{proof}
We defined the following functions $\left\{G_{y}(\vec{p}_z)\right\}_{y=1}^K$, $G_y: {\mathbb{R}}_+^{K} \rightarrow \mathbb{R}$:
\begin{equation}
G_{y}(\vec{p}_z)=\ddss\sum_{{x}}p({x})p({y}|{x})p(z|{x})Q\left(\ddss\frac{\ddss\vec{p}_z\cdot \vec{p}_{{y}}}{p({y}|{x})p(z|{x})}\right).
\end{equation}
In order to prove the convexity of $G_y(\vec{p}_z)$, we use the following known result (cf., e.g., \cite{ConvexOptimization}):
$G_y(\vec{p}_z)$ is convex if and only if $\mbox{dom}(G_y)$ is convex, and the function $g:\mathbb{R}\rightarrow\mathbb{R}$, defined by:
\begin{eqnarray}
g(t)=G_y(\vec{r}+\vec{v}t)&,& \mbox{dom}(g)=\left\{t:\vec{r}+\vec{v}t\in \mbox{dom}(G_y)\right\},
\end{eqnarray}
is convex in $t$, for any $\vec{r}\in \mbox{dom}(G_y)$, $\vec{v}\in {\mathbb{R}}^{K}$.
Substituting $\vec{p}_z=\vec{r}+\vec{v}t$, we get:
\begin{eqnarray}
\nonumber g(t)&=&\ddss\sum_{x=1}^{K}p(x)p(y|x)(r_x+v_x t)Q\left(\ddss\frac{\ddss[\vec{r}\cdot\vec{p}_{y}]+[\vec{v}\cdot \vec{p}_{y}]t} {p(y|x)(r_x+v_x t)}\right)\\
&=&\ddss\sum_{x=1}^{K}p(x)p(y|x)(r_x+v_x t)Q\left(\ddss\frac{\ddss a_{y}+b_{y}t} {p(y|x)(r_x+v_x t)}\right),
\end{eqnarray}
where $\{r_x\}_{x=1}^{K}$ and $\{v_x\}_{x=1}^{K}$ are the elements of $\vec{r}$ and $\vec{v}$, respectively, and $a_y\equiv\vec{r}\cdot \vec{p}_{y}$, $b_y\equiv\vec{v}\cdot \vec{p}_{y}$ are constants. Linear combination of convex functions is convex.
Thus, it is enough to show the convexity of the following functions, $x=\left\{1,\ldots, K \right\}$:
\begin{eqnarray}
\nonumber f_x(t)&=&\ddss p(y|x)(r_x+v_x t)Q\left(\ddss\frac{\ddss a_{y}+b_{y}t} {p(y|x)(r_x+v_x t)}\right)\\
&=&h\left(a_y+b_y t,p(y|x)(r_x+v_x t)\right),
\end{eqnarray}
where $h(u,s)=\ddss s Q\left(u/s\right)$ is the perspective (cf., e.g., \cite{ConvexOptimization}) of the convex function $Q(u)$, and thus convex. The function $f_x(t)$ is the restriction of the convex function $h(u,s)$ to the straight line\\ 
$\left\{u=p(y|x)(r_x+v_x t), s=a_y+b_y t\right\}$. Therefore, $f_x(t)$ is convex.
\end{proof}
\section*{Appendix B - Proof of Eq. (\ref{boundclassicholder})}
\renewcommand{\theequation}{B.\arabic{equation}}
\setcounter{equation}{0}
\begin{proof}
It is enough to calculate the limit:
\begin{eqnarray}
\nonumber\displaystyle\lim_{\alpha\rightarrow 1}\displaystyle\frac{1}{\alpha-1}\log\left(\displaystyle\sum_y    \displaystyle  \left(\displaystyle \sum_{x} p(x)\cdot p(y|x)^{\frac{1}{2-\alpha}} \right)^{2-\alpha}\right)\\
=\displaystyle\lim_{\alpha\rightarrow 1}\displaystyle \ddss\frac{d}{d\alpha}\left[\log\left(\displaystyle\sum_y    \displaystyle  \left(\displaystyle \sum_{x} p(x)\cdot p(y|x)^{\frac{1}{2-\alpha}} \right)^{2-\alpha}\right)\right],
\end{eqnarray}
where the equality is due to L'H\^{o}pital's rule, using the fact that both the numerator and denominator tend to $0$ as $\alpha\rightarrow 1$. Noticing that the expression in the brackets has the form of the Gallager function (see, e.g., \cite[Chap. 5]{Gallager}) for the given DMC, defined as
\bbee
E_0(\rho,\mathbf{p})=-\log\left(\displaystyle\sum_y    \displaystyle  \left(\displaystyle \sum_{x} p(x)\cdot p(y|x)^{\frac{1}{1+\rho}} \right)^{1+\rho}\right),
\eeee
we can use the following property of $E_0(\rho,\mathbf{p})$:
\bbee
\left.\frac{\partial E_0(\rho,\mathbf{p})}{\partial \rho} \right|_{\rho=0}= I(X;Y)
\eeee
to get:
\bbee
\displaystyle \ddss\frac{d}{d\alpha}\left.\left[\log\left(\displaystyle\sum_y    \displaystyle  \left(\displaystyle \sum_{x} p(x)\cdot p(y|x)^{\frac{1}{2-\alpha}} \right)^{2-\alpha}\right)\right]\right|_{\alpha=1}=I(X;Y),
\eeee
which completes the proof.
\end{proof}
\section*{Appendix C - Proof of Lemma \ref{RDgenerallemma}}
\renewcommand{\theequation}{C.\arabic{equation}}
\setcounter{equation}{0}
The idea of the proof is to exploit the symmetry of the distortion matrix. From symmetry, we expect that each input symbol will have the same set of transition probabilities.
\begin{proof}
We define $p(i|j)\equiv P_{\hat{X}|X}(i|j)$. By definition:
\begin{eqnarray}
\nonumber I^Q(X;\hat{X})&=&\displaystyle\sum_{x,\hat{x}}p(x,\hat{x})Q\left(\frac{p(\hat{x})}{p(\hat{x}|x)}\right)\\
&=&\displaystyle\frac{1}{K}\sum_{i=1}^{K}p(i|i) Q\left(\frac{p(\hat{x}=i)}{p(i|i)}\right)+\displaystyle\frac{1}{K}\ddss\sum_{i=1}^{K}\sum_{j\neq i}^{K}p(j|i) Q\left(\frac{p(\hat{x}=j)}{p(j|i)}\right).
\label{proofrd1}
\end{eqnarray}
Each row of the distortion matrix contains the same $K$ values. We enumerate these values as $\{\rho_1,\rho_2,\ldots,\rho_{K}\}$ where $\rho_1\equiv \rho(x,x)=0$. Without loss of generality and only for convenience of the proof, we assume that we have $K$ different values.
We define:
\bbee
\tilde{x}_k(x)=\{\tilde{x}\in{{\cal{X}}}:\rho(x,\tilde{x})=\rho_k\}.
\eeee
In words, $\tilde{x}_k(x)$ is the unique alphabet symbol with distortion $\rho_k$ relative to $x$. Using this definition, we can write ($\ref{proofrd1}$) in the following way:
\begin{eqnarray}
\nonumber I^Q(X;\hat{X})&=&\displaystyle\frac{1}{K}\sum_{i=1}^{K}p(i|i) Q\left(\frac{p(\hat{x}=i)}{p(i|i)}\right)+\displaystyle\frac{1}{K}\sum_{i=1}^{K}\sum_{k=2}^{K}p(\tilde{x}_k(i)|i) Q\left(\frac{p(\hat{x}=\tilde{x}_k(i))}{p(\tilde{x}_k(i)|i)}\right)\\
\label{RDGEN11111}
\end{eqnarray}
We define:
\bbee
\bbaa{lllll}
p_k&=&\displaystyle\frac{1}{K}\ddss\sum_{i=1}^{K}p(\tilde{x}_k(i)|i),&k\in\{2,\ldots,K\}\\
p_1&=&\displaystyle\frac{1}{K}\ddss\sum_{i=1}^{K}p(i|i)=1-\ddss\sum_{k=2}^{K}p_k &&
\eeaa
\eeee
Using these definitions, the distortion is given by:
\bbee
D=\displaystyle\frac{1}{K}\ddss\sum_{k=2}^{K}\ddss\sum_{i=1}^{K}p(\tilde{x}_k(i)|i)\rho_k=\ddss\sum_{k=2}^{K}p_k \rho_k\\
\label{consRDproof}
\eeee
Applying the Jensen inequality, we get:
\begin{eqnarray}
\nonumber I^Q(X;\hat{X})&=&\displaystyle\frac{1}{K}\sum_{i=1}^{K}p(i|i) Q\left(\frac{p(\hat{x}=i)}{p(i|i)}\right)+
\displaystyle\frac{1}{K}\sum_{k=2}^{K}\sum_{i=1}^{K}p(\tilde{x}_k(i)|i)
 Q\left(\frac{p(\hat{x}=\tilde{x}_k(i))}{p(\tilde{x}_k(i)|i)}\right)\\
\nonumber &=&p_1 \ddss\sum_{i=1}^{K}\ddss\frac{p(i|i)}{K p_1 } Q\left(\ddss\frac{p(\hat{x}=i)}{p(i|i)}\right)+\ddss\sum_{k=2}^{K}p_k\sum_{i=1}^{K}\frac{p(\tilde{x}_k(i)|i)}{K p_k} Q\left(\frac{p(\hat{x}=\tilde{x}_k(i))}{p(\tilde{x}_k(i)|i)}\right)\\
\nonumber &\geq&\displaystyle p_1  Q\left(\sum_{i=1}^{K}\frac{p(i|i)p(\hat{x}=i)}{K p_1 p(i|i)}\right)+
\ddss\sum_{k=2}^{K}p_kQ\left(\sum_{i=1}^{K}\frac{p(\tilde{x}_k(i)|i)p(\hat{x}=\tilde{x}_k(i))}{K p_k p(\tilde{x}_k(i)|i)}\right)\\
\nonumber &=&p_1 \displaystyle Q\left(\frac{1}{K p_1 }\right)
+\ddss\sum_{k=2}^{K}p_k Q\left(\frac{1}{K p_k }\right)\\
&=&\displaystyle\sum_{k=1}^{K}p_k Q\left(\ddss\frac{1}{K p_k}\right).
\label{RDproof5}
\end{eqnarray}
Notice that the sum $\ddss\sum_{i=1}^{K}p(\hat{x}=\tilde{x}_k(i))$ is running over all values of $\hat{x}$ due to the symmetry of the distortion matrix ($\rho_k$ appears in each column) and thus equal to $1$. The lower bound in (\ref{RDproof5}) is achieved by a channel of the form:
\begin{eqnarray}
p(\tilde{x}_k(i)|i)=p_k,&k\in\{2,\ldots,K\},&i\in \{1,2,\ldots,K\}.
\label{chanrdproof}
\end{eqnarray}
This channel achieves, of course, the same distortion $D$, which depends only on $\{p_k\}$.
We also have for this channel:
\begin{eqnarray}
p(\hat{x}=i)=\ddss\frac{1}{K}p(i|i)+\ddss\frac{1}{K}\ddss\sum_{k=2}^{K} p_k=\frac{1}{K}
\label{equRDgen111}
\end{eqnarray}
Substituting (\ref{chanrdproof}) and (\ref{equRDgen111}) in (\ref{RDGEN11111}), we get:
\begin{eqnarray}
\nonumber I^Q(X;\hat{X})&=&\displaystyle\frac{1}{K}\sum_{i=1}^{K}p(i|i) Q\left(\frac{p(\hat{x}=i)}{p(i|i)}\right)+\displaystyle\frac{1}{K}\sum_{i=1}^{K}\sum_{k=2}^{K}p(\tilde{x}_k(i)|i) Q\left(\frac{p(\hat{x}=\tilde{x}_k(i))}{p(\tilde{x}_k(i)|i)}\right)\\
\nonumber &=&\displaystyle\frac{1}{K}\sum_{i=1}^{K}p_1 Q\left(\frac{1}{K p_1}\right)+\displaystyle\frac{1}{K}\sum_{i=1}^{K}\sum_{k=2}^{K}p_k Q\left(\frac{1}{K p_k}\right)\\
&=&\displaystyle\sum_{k=1}^{K}p_k Q\left(\ddss\frac{1}{K p_k}\right),
\label{rdprooflowerbound}
\end{eqnarray}
which is exactly the lower bound. In summary, we showed that the channel that minimizes $I^Q(X;\hat{X})$, among all channels with the same $\{p_k\}_{k=1}^{K}$, is of the form (\ref{chanrdproof}). Therefore, to get the RD function, it is enough to optimize (\ref{rdprooflowerbound}) over all probability measures $\{p_k\}_{k=1}^{K}$, subject to the constraint (\ref{consRDproof}). 
The generalized mutual information $I^Q(X;\hat{X})$ is, of course, convex in $\{p_k\}_{k=1}^{K}$. Thus, this is a standard convex minimization under linear constraints problem and the solution is given by the Karush-Kuhn-Tucker conditions, which are exactly (\ref{KKT1}) and (\ref{KKT2}). The lower bound obtained must be convex in $D$, otherwise we could achieve its lower convex envelop by linear combination of $\{p_k\}$'s. In addition, it has a minimum equal to $Q(1)$ at $D=\ddss\frac{1}{K}\sum_{k=1}^K\rho_k$. This minimum is achieved by taking uniform $\{p_k\}$, which agrees with $\lambda_2=0$ in Eq. (\ref{KKT1}). Since the lower bound is decreasing for $D\leq\ddss\frac{1}{K}\sum_{k=1}^K\rho_k$, the distortion constraint $\mathbb{E} \rho(X,\hat{X}) \leq D$ is indeed achieved with equality in this range. For $D>\ddss\frac{1}{K}\sum_{k=1}^K\rho_k$, we of course have $R^Q(D)=Q(1)$.
\end{proof}
\section*{Appendix D - Proof of Lemma \ref{RDHamming123}}
\renewcommand{\theequation}{D.\arabic{equation}}
\setcounter{equation}{0}
\begin{proof}
Again, the idea of the proof is to exploit the symmetry of the source and the Hamming distortion.
We assume that the source $X$ is uniformly distributed over ${\cal{X}}$, and use the Hamming distortion measure.
Under these conditions, the distortion $D$ is given by:
\begin{eqnarray}
\nonumber D &=&\ddss\sum_{i,j =1}^{{{K}}} p(i) p(j|i) \rho(i,j)\\
\nonumber &=&\ddss \frac{1}{{K}} \sum_{j}\sum_{i\neq j} p(j|i)\\
&=&\ddss \frac{1}{{K}} \sum_{j} D_j,
\end{eqnarray}
where we have defined $p(i)=P_X(i)$, $p(j|i)=P_{\hat{X}|X}(j|i)$, and:
\begin{equation}
D_j=\sum_{i\neq j} p(j|i).
\end{equation}
Thus we have the following:
\begin{equation}
p(\hat{x}=j)=\frac{1}{{K}}\sum_i p(j|i) =\frac{1}{{K}}\sum_{i\neq j} p(j|i)+\frac{1}{{K}} p(j|j) = \frac{1}{{K}} \left[ D_j+ p(j|j) \right]
\label{rd1}
\end{equation}
and
\begin{equation}
\begin{array}{rllll}
 1-D&=&\ddss\frac{1}{{K}}\sum_i p(i|i)&&\\
 D&=&\ddss\frac{1}{{K}}\sum_j\sum_{i\neq j} p(j|i)&=&\ddss\frac{1}{{K}}\sum_i\sum_{j\neq i} p(j|i).
\end{array}
\label{rd2}
\end{equation}
Calculating :
\begin{eqnarray}
\nonumber I^Q(X;\hat{X})&=&\ddss\sum_{x,\hat{x}}p(x,\hat{x})Q\left(\frac{p(\hat{x})}{p(\hat{x}|x)}\right)\\
\nonumber &=&\ddss\frac{1}{K}\sum_{i} p(i|i)Q\left[\frac{1}{{K}}\ddss \left(\frac{D_i}{p(i|i)}+1\right)\right]+
\ddss\frac{1}{K}\sum_{i}\sum_{j\neq i} p(j|i)Q\left[\frac{\ddss D_j+p(j|j)}{Kp(j|i)}\right]\\
\nonumber &\geq&\ddss (1-D) \cdot Q\left[\frac{1}{(1-D){K}^2}\sum_{i} \left( D_i+p(i|i)\right) \right]+\\
\nonumber &&\ddss D\cdot Q\left[\frac{1}{D{K}^2}\sum_i\sum_{j\neq i}p(j|i)\left(\frac{D_j+p(j|j)}{p(j|i)}\right)\right]\\
\nonumber &=&\ddss (1-D) \cdot Q\left[\frac{D+1-D}{K(1-D)}       \right]+
\ddss D\cdot Q\left[\frac{1}{DK}\sum_i \left( D-\frac{D_i}{{K}}+1-D-\frac{p(i|i)}{{K}} \right)\right]\\
&=&\ddss (1-D) \cdot Q\left[ \frac{1}{K(1-D)}         \right]+
\ddss D\cdot Q\left[   \frac{ {K}-1 }{{KD}}\right].
\end{eqnarray}
The second equality is due to (\ref{rd1}). The inequality is obtained by applying the Jensen inequality to each one of the two weighted sums, after normalizing the weights using (\ref{rd2}). The subsequent equalities are obtained by calculating the sums that appear as arguments of the function $Q$ using (\ref{rd2}) and by simple algebraic manipulations. 
It is easy to show, by simple substitution in $I^Q(X;\hat{X})$, that this lower bound is achieved , for any convex function $Q(t)$, by the following symmetric channel:
\begin{equation}
p(\hat{x}|x)=\left\{\begin{array}{lll} 1-D& \hat{x}=x      \\  \ddss      \frac{D}{{K}-1}  &    \hat{x}\neq x \end{array} \right.
\end{equation}
Following the same steps as in the end of the proof of Lemma \ref{RDgenerallemma}, we get (\ref{RDF}). 
\end{proof}
\section*{Appendix E - Proof of Eq. (\ref{th1mindis})}
\renewcommand{\theequation}{E.\arabic{equation}}
\setcounter{equation}{0}
\begin{proof}
We assume that the source is uniformly distributed and that the DMC is given by (\ref{symchan}). The Hamming distortion is equal to the average probability of error. Therefore, given a scalar encoder of fixed rate $R=\log M$, the optimal decoding strategy is, of course, maximum likelihood:
\begin{equation}
\hat{x}=g(z,y)=\displaystyle\argmax_{x\in A_z} \{ p(y|x) \}
\end{equation}
For the channel (\ref{symchan}), the decoder gets the form:
\begin{equation}
\hat{x}=\left\{ \begin{array}{lll} y& y\in A_z \\ \text{choose } x' \in A_z\ \text{uniformly at random} & y\notin A_z  \end{array} \right.
\end{equation}
Given $x\in A_z$, we have two error events. The first error event is when $y\in A_z$ and $y\neq x$.
The probability of such an event is $(M_z-1)\epsilon$, where $M_z$,  $M_z \in \{1,\ldots, {K}-M+1\}$, is the cardinality of $A_z$, i.e., the number of source symbols that are encoded to $z$. The other error event is when $y\notin A_z$ and $x'\neq x$. The probability of this event is the product $\Pr\{y \notin A_z|x\}\cdot \Pr\{x'\neq x|x\} = ({K}-M_z)\epsilon \cdot (M_z-1)/M_z$. Thus, the distortion is given by:
\begin{eqnarray}
\nonumber d&=&\displaystyle \frac{1}{{K}}\displaystyle\sum_{x} \Pr\{\text{error}|x\}\\ 
\nonumber &=& \displaystyle \frac{1}{{K}}\displaystyle\sum_{z}\sum_{x\in A_z}[(M_z-1)\epsilon+({K}-M_z)\epsilon (M_z-1)/M_z]\\
\nonumber &=& \displaystyle \frac{\epsilon}{{K}}\displaystyle\sum_{z}M_z [(M_z-1)+({K}-M_z) (M_z-1)/M_z]\\
\nonumber &=& \displaystyle \frac{\epsilon}{{K}}\displaystyle\sum_{z}[M_z\cdot(M_z-1)+({K}-M_z) (M_z-1)]\\
\nonumber &=& \displaystyle \frac{\epsilon}{{K}}\displaystyle\sum_{z} (M_z-1) [M_z+{K}-M_z]\\
\nonumber &=& \displaystyle \epsilon \displaystyle\sum_{z} (M_z-1)\\
\nonumber &=& \displaystyle \epsilon \displaystyle (K-M)\\
&=& \displaystyle \epsilon \displaystyle (K-2^R).
\end{eqnarray}
Notice that $d(R)$ is a decreasing concave function. Thus, by time-sharing an encoder with $R=\log K$ and an encoder with $R=0$, we can achieve the straight line $\tilde{d}(R)=\epsilon(K-1)(1-R/\log K)$ and outperform any fixed-rate encoder. 
\end{proof}
\section*{Appendix F - The concavity of $q_{\alpha}(x)$ defined in Eq. (\ref{q_a_def})}
\renewcommand{\theequation}{F.\arabic{equation}}
\setcounter{equation}{0}
We start with explaining the following equality for the channel (\ref{symchan}):
\begin{eqnarray}
\nonumber \displaystyle\sum_{z}\sum_y\displaystyle\frac{\vec{p}_z \cdot \vec{p}^{\alpha}_{y}}{(\vec{p}_z \cdot \vec{p}_{y})^{\alpha-1}}
&=&\displaystyle\sum_z M_z\cdot \displaystyle\frac{\left(M_z+\mu^{\alpha}/\epsilon^{\alpha}-1 \right)}{\left(M_z+\mu/\epsilon-1\right)^{\alpha-1} }+({K}-M_z)M_z^{2-\alpha}\\
&=&\displaystyle\sum_z q_{\alpha}(M_z).
\label{q_a_def222}
\end{eqnarray}
The sum over $y$ is calculated by noticing that the product $[\vec{p}_z \cdot \vec{p}_{y}]$, and, respectively, the product $[\vec{p}_z \cdot \vec{p}^{\alpha}_{y}]$, can have one of two results. If the $1$'s in the binary vector $\vec{p}_z$ overlap only $\epsilon$'s in $\vec{p}_{y}$, we get $[\vec{p}_z \cdot \vec{p}_{y}]=M_z\cdot\epsilon$ and respectively, $[\vec{p}_z \cdot \vec{p}^{\alpha}_{y}]=M_z\cdot\epsilon^{\alpha}$. Otherwise, we get $[\vec{p}_z \cdot \vec{p}_{y}]=\mu+(M_z-1)\cdot\epsilon$ and respectively, $[\vec{p}_z \cdot \vec{p}^{\alpha}_{y}]=\mu^{\alpha}+(M_z-1)\cdot\epsilon^{\alpha}$. It is not hard to see that the second result will occur exactly $M_z$ times in the sum on $y$, and thus the first will occur exactly $K-M_z$ times. The rest is straightforward.
We now prove the concavity of the function $q_{\alpha}(x)$, $ 1 \leq x \leq {K}-M+1$.
The second derivative of $q_{\alpha}(x)$ is ($c_{\alpha}\equiv \mu^{\alpha}/\epsilon^{\alpha}$):
\begin{eqnarray}
\nonumber q_{\alpha}''(x)&=&
2(2-\alpha)\left(x+\mu/\epsilon-1\right)^{1-\alpha}
-x(2-\alpha)(\alpha-1)\left(x+\mu/\epsilon-1\right)^{-\alpha}\\
\nonumber &&-2c_{\alpha}(\alpha-1)\left(x+\mu/\epsilon-1\right)^{-\alpha}
+x\cdot c_{\alpha}\cdot\alpha(\alpha-1)\left(x+\mu/\epsilon-1\right)^{-\alpha-1}\\
\nonumber &&-2(2-\alpha)x^{1-\alpha}-(2-\alpha)({K}-x)(\alpha-1)x^{-\alpha}\\
\nonumber &\leq& 2(2-\alpha)x^{1-\alpha}
-x(2-\alpha)(\alpha-1)\left(x+\mu/\epsilon-1\right)^{-\alpha}\\
\nonumber &&-2c_{\alpha}(\alpha-1)\left(x+\mu/\epsilon-1\right)^{-\alpha}
+x\cdot c_{\alpha}\cdot\alpha(\alpha-1)\left(x+\mu/\epsilon-1\right)^{-\alpha-1}\\
&&-2(2-\alpha)x^{1-\alpha}.
\end{eqnarray}
The inequality follows from the assumption $\mu\geq \epsilon$, which leads to $x\leq x+\mu/\epsilon -1$, and from the fact that the last term in the derivative is negative.
Doing some algebraic manipulations, we get:
\begin{eqnarray}
\nonumber q_{\alpha}''(x)
&\leq& -(2-\alpha)(\alpha-1)x\left(x+\mu/\epsilon-1\right)^{-\alpha}\\
\nonumber &&-2c_{\alpha}(\alpha-1)\left(x+\mu/\epsilon-1\right)^{-\alpha}
+\cdot c_{\alpha}\cdot\alpha(\alpha-1)x\left(x+\mu/\epsilon-1\right)^{-\alpha-1}\\
\nonumber &\propto& -(2-\alpha)x\left(x+\mu/\epsilon-1\right)
-2c_{\alpha}\left(x+\mu/\epsilon-1\right)
+x\cdot c_{\alpha}\cdot\alpha\\
\nonumber &=& -\left[(2-\alpha)x
+2c_{\alpha}\right]\left(x+\mu/\epsilon-1\right)
+x\cdot c_{\alpha}\cdot\alpha\\
\nonumber &\leq&-\left[(2-\alpha)x
+2c_{\alpha}\right]x
+x\cdot c_{\alpha}\cdot\alpha\\
\nonumber &\propto&-(2-\alpha)x
-2c_{\alpha}
+c_{\alpha}\cdot\alpha\\
\nonumber &=&-(2-\alpha)x
-c_{\alpha}(2-\alpha)\\
&<&0.
\end{eqnarray}
Notice that the constant of proportionality is positive in all cases.
We showed that $q_{\alpha}(x)$ is concave for $1<\alpha<2$. Checking the concavity/convexity out of this range can be done numerically, by simple calculation of  $q''_{\alpha}(x)$. 
\clearpage

\newpage
\newenvironment{cvlist}[1]{
\begin{list}
{}{\settowidth{\leftmargin}{#1}
\addtolength{\leftmargin}{2em}
\setlength{\rightmargin}{0in}
\setlength{\labelsep}{0in}
\setlength{\labelwidth}{\leftmargin}
\setlength{\parsep}{0in}    }}{
\end{list}}

\newcommand{\cvitem}[1]{\item[#1\hfill]}

\pagestyle{myheadings}
\thispagestyle{empty}
\begin{center}
Avraham Reani - Biography
\end{center}

{\bf Avraham Reani} received the B.Sc. and M.sc degrees in Electrical Engineering, and the B.A degree in Physics, from the Technion, 
Israel Institute of Technology, in 2003, 2010 and 2003, respectively.
During 2003--2006 he was serving as an R\&D officer at the IDF. 
During 2007--2008 he was an engineer at Intel, Haifa, Israel.
He is now a Ph.D student at the Electrical Engineering Department
of the Technion, under the supervision of Prof. Neri Merhav.

\begin{center}
Neri Merhav - Biography
\end{center}

{\bf Neri Merhav} (S'86--M'87--SM'93--F'99) was born 
in Haifa, Israel, on March 16, 1957. He received the
B.Sc., M.Sc., and D.Sc.\ degrees from the Technion, 
Israel Institute of Technology,
in 1982, 1985, and 1988, 
respectively, all in electrical engineering.

From 1988 to 1990 he was with AT\&T Bell Laboratories,
Murray Hill, NJ, USA. 
Since 1990 he has been with the 
Electrical Engineering Department
of the Technion, where
he is now the Irving Shepard Professor.
During 1994--2000 he was also serving 
as a consultant to the Hewlett--Packard 
Laboratories -- Israel (HPL-I). 
His research interests include
information theory, statistical communications,
and statistical signal processing. He is especially
interested in the areas of lossless/lossy source coding
and prediction/filtering, relationships between information
theory and statistics, detection,
estimation, as well as 
in the area of Shannon Theory, including topics in
joint source--channel coding, source/channel simulation,
and coding with side information with applications to
information hiding and watermarking systems.
Another recent research interest concerns the relationships between
Information Theory and statistical physics.

Dr.\ Merhav was a co-recipient of the 1993 Paper Award 
of the IEEE Information Theory Society and he is a
Fellow of the IEEE since 1999. He also received the 
1994 American Technion Society 
Award for Academic Excellence and the 2002
Technion Henry Taub Prize for Excellence in Research.
From 1996 until 1999 he served as an Associate Editor
for Source Coding to the 
{\sc IEEE Transactions on Information Theory}.
He also served as a co--chairman of the Program Committee
of the 2001 IEEE International Symposium on Information Theory.
He is currently on the Editorial Board of {\sc Foundations
and Trends in Communications and Information Theory}.

\begin{thebibliography}{9}
\bibitem{Shannon1948}
C. E. Shannon, ``A mathematical theory of communication,''
\emph{Bell Syst.\ Tech.\ J.}, vol.\ 27, pt.~I, pp.~379--423, 1948;
     pt.~II, pp.~623--656, 1948.
\bibitem{Kusuma01}
J.~Kusuma, ``Slepian--Wolf coding and related problems,'' preprint 2001. Available at:\\
{www.mit.edu/$\sim$6.454/www\_fall\_2001/kusuma/summary.pdf}
\bibitem{Servetto03}
S.~D.~Servetto, ``Quantization with side information: lattice codes, 
asymptotics, and applications to sensor networks,'' vol.\ 53, no.\ 2, pp.\ 714--731,
February 2007.
\bibitem{XLC04}
Z.~Xiong, A.~D.~Liveris, and S.~Cheng, ``Distributed source coding for sensor networks,'' 
{\it IEEE Signal Processing Magazine}, pp.\ 80--94, September 2004.
\bibitem{SM04}
Y.~Steinberg and N.~Merhav, ``On successive refinement for the
Wyner--Ziv problem,'' {\it IEEE Trans.\ Inform.\ Theory}, vol.\ 50,
no.\ 8, pp.\ 1636--1654, August 2004.
\bibitem{CX04}
S.~Cheng and Z.~Xiong,
``Successive refinement for the Wyner--Ziv problem and 
layered code design,'' in {\it Proc.\ DCC 2004}, Snowbird,
UT, 2004.
\bibitem{kusuma}
J. Kusuma, L. Doherty and K. Ramchandran, ``Distributed compression for sensor networks,''
in \emph {Proc. {\it ICIP 2001}}, vol.\ 1, pp.~82--85, Thessaloniki, Greece, October 2001.
\bibitem{muresan1}
D. Muresan and M. Effros, ``Quantization as histogram segmentation: Optimal scalar quantizer design in network systems,''
\emph {IEEE Trans. Inform. Theory}, vol.\ 54, pp.~344--366, January 2008.
\bibitem{Teneketzis}
D. Teneketzis, ``On the structure of optimal real-time encoders and decoders in noisy communication,'' 
\emph{IEEE Trans. Inform. Theory}, vol.\ 52, pp.~4017--4035, September 2006.
\bibitem{kaspi}
Y. Kaspi and N. Merhav, ``Structure theorems for real-time variable rate coding with
and without side information,''  Available at: {http://arxiv.org/pdf/1108.2881v1.pdf}
\bibitem{tuncel1}
J.~Nayak, E.~Tuncel, ``Low-delay quantization for source coding with side information,''
\emph {Proc. {\it ISIT 2008}}, pp.~2732--2736, Toronto, Canada, August 2008.
\bibitem{tuncel2}
X.~Chen, E.~Tuncel, ``Low-delay prediction and transform-based Wyner-Ziv Coding,''
{\it IEEE Trans.\ Signal\ Processing}, vol.\ 59, no.\ 2, pp.\ 653--666 , November 2010.
\bibitem{tuncel3}
X.~Chen, E.~Tuncel, ``High-resolution predictive Wyner-Ziv coding of Gaussian sources,''
\emph {Proc. {\it ISIT 2009}}, pp.~1204--1208, Seoul, Korea, August 2009.
\bibitem{reani}
A. Reani and N. Merhav, ``Efficient on-line schemes for encoding individual sequences with side information at the decoder,''
\emph{IEEE Trans. Inform. Theory}, vol.\ 57, pp.~6860--6876, October 2011.
\bibitem{ZZ1}
J.~Ziv and M.~Zakai, ``On functionals satisfying a data-processing
theorem,''
{\it IEEE Trans.\ Inform.\ Theory}, vol.\ 19, no.\ 3, pp.\ 275--283, May 1973.
\bibitem{ZZ2}
M.~Zakai and J.~Ziv, ``A generalization of the rate-distortion theory and applications,'' in:
{\it Information Theory New Trends and Open Problems}, edited by G. Longo, Springer-Verlag, pp.\ 87–-123, 1975.
\bibitem{zamir1}
I.~ Leibowitz, R.~Zamir, ``A Ziv-Zakai-R\'{e}nyi lower bound on distortion at high resolution,'' 
\emph {Proc. {\it ITW 2008}}, pp.~174--178, Porto, Portugal, May. 2008.
\bibitem{zamir2}
S.~Tridenski, R.~Zamir, ``Bounds for joint source-channel coding at high SNR,'' 
\emph {Proc. {\it ISIT 2011}}, pp.~771--775, St. Petersburg, Russia, August 2011.
\bibitem{zamir3}
A.~Ingber, I.~Leibowitz, R.~Zamir, M.~Feder, ``Distortion lower bounds for finite dimensional joint source-channel coding,''
\emph {Proc. {\it ISIT 2008}}, pp.~1183--1187, Toronto, Canada, August 2008.
\bibitem{Renyi}
A.~R\'{e}nyi, ``On measures of entropy and information,'' 
\emph {Proc. {\it 4th Berk. Symp. Math., Stat. and Prob.}}, pp.~547–-561, Univ. of Calif. Press, 1961.
\bibitem{Merhav1}
N.~Merhav, ``Data processing theorems and the second law of thermodynamics,''
{\it IEEE Trans.\ Inform.\ Theory}, vol.\ 57, no.\ 8, pp.\ 4926--4939, August 2011.
\bibitem{BAalg}
F.~Dupuis, W.~Yu, F.M.J.~Willems, ``Blahut-Arimoto algorithms for computing channel capacity and rate-distortion with side information,''
\emph {Proc. {\it ISIT 2004}}, pp.~179, Chicago, USA, June 2004.
\bibitem{ConvexOptimization}
S.~Boyd, L.~Vandenberghe, {\it Convex Optimization},
{Cambridge University Press}, 2009.
\bibitem{Gallager}
R.G.~Gallager, {\it Information Theory and Reliable Communication},
{John Wiley \& Sons}, 1968.
\end{thebibliography}
\end{document}